\tikzstyle{decision} = [diamond, draw, fill=blue!20, 
\tikzstyle{block} = [rectangle, draw, fill=blue!20, 
\tikzstyle{line} = [draw, -latex']
\tikzstyle{cloud} = [draw, ellipse,fill=red!20, node distance=3cm,
\newcommand{\rt}{\mathrm{T}}
\newcommand{\rtKL}{\mathrm{TKL}}
\tikzset{main node/.style={circle,fill=blue!20,draw,minimum size=1cm,inner sep=0pt},  }
\begin{document}
\title[Transport alpha divergences]{Transport alpha divergences}
\author[Li]{Wuchen Li}
\email{wuchen@mailbox.sc.edu}
\address{Department of Mathematics, University of South Carolina.}

\keywords{Transport {alpha divergence}; Quantile density function; Transport Hessian metric; Transport 3--symmetric tensor; Gamma $3$ calculus.} 
\begin{abstract}
We derive a class of divergences measuring the difference between probability density functions on the one-dimensional sample space. This divergence is a one-parameter variation of the {Itakura--Saito} divergence between quantile density functions. We prove that the proposed divergence is a one-parameter variation of the transport Kullback-Leibler divergence and the Hessian distance of negative Boltzmann entropy with respect to the Wasserstein-$2$ metric. From Taylor expansions, we also formulate the $3$-symmetric tensor in Wasserstein-$2$ space, which is given by an iterative Gamma three operator. The alpha-geodesic on Wasserstein space is also derived. From these properties, we name the proposed divergences {\em transport alpha divergences}. We provide several examples of {transport alpha divergences on one dimensional distributions, such as generative models and Cauchy distributions.} 
\end{abstract}
\maketitle

\section{Introduction}
{Divergences} between probability distributions play essential roles in statistics, information theory, and signal processing \cite{IG, IG2, CA, CoverThomas1991_elements}. It is a generalization of Kullback-Leibler (KL) divergence, with dualities and variational properties. One typical example is the alpha divergence, which has vast applications in machine learning inference problems and Bayesian sampling problems. 

Information geometry (IG) studies {Legendre dualities}, and invariance properties of divergences in probability density space. Examples include Kullback--Leibler (KL) divergence, {alpha divergences}, and their generalizations \cite{IG, IG2, CA, Zhang}. In this study, the derivatives of negative Boltzmann--Shannon entropy in $L^2$ space play fundamental roles. Its first-order derivative is the likelihood function, its second derivative satisfies the Fisher-Rao metric, while its third derivative introduces the Amari--Chenstov tensor. {These derivatives} characterize the KL divergence and its one-parameter variation, namely {alpha divergences}, where alpha is a scalar. From these characterizations, IG studies and constructs finite-dimensional probability models, with approximation and convexity properties in inference problems. 

Recently, {optimal
transport leads to a different kind of divergences between probability distributions, of which
the Wasserstein distance is a fundamental example \cite{Villani2009_optimal}}. The Wasserstein distance introduces the dualities based {on the ground cost function, known as the Kantorovich duality}. Optimal transport nowadays has vast applications in {artificial intelligence (AI)}, such as generative adversarial networks \cite{ACB}. 
{In particular, there also exists a Riemannian metric for Wasserstein-$2$ distance in probability density space \cite{Lafferty, otto2001, Villani2009_optimal}}. Under the Wasserstein-$2$ metric, the derivatives of negative Boltzman-Shannon entropy are of importance in simulating physics equations \cite{JKO} and Ricci curvature lower bound in a sample space {\cite{BE, NG, LV, RS}}. The study of first- and second-order derivatives in Wasserstein-$2$ space has also been used in statistics and optimization with applications in machine learning algorithms \cite{LiG}. A natural question arises. {{\em Is there an analogue of the alpha divergence based on
optimal transport?}}

{This paper partially addresses this question. We apply information geometry methods, {such as the dualistic
geometry of a divergence/contrast function}, to construct divergences of probability densities from transport maps}. For simplicity of presentation, we focus on the result in one-dimensional sample space, {where the ground distance is the {squared} Euclidean distance}. {We introduce a one-parameter family, named transport alpha divergence}, which interpolates the {transport KL divergence \cite{LiGB1} and the transport Hessian distance \cite{LiGB2}}. We derive the third order derivative, i.e., a $3$--symmetric tensor, of the negative Boltzmann--Shannon entropy in Wasserstein-$2$ space. Several properties of transport {alpha divergence}s are presented, including dualities, Taylor expansions, generalized Bregman divergences, and the generalized Pythagorean theorem in Wasserstein-$2$ space.  

We briefly present the main result. Given a one-dimensional domain $\Omega$ and two {strictly-positive} continuous probability density functions $p$, $q$, we define the {transport alpha divergence} as
\begin{equation}\label{ta}
\mathrm{D}_{\rt, \alpha}(p\|q)=
\left\{\begin{aligned}
&\frac{1}{\alpha^2}\int_0^1\Big(\big(\frac{Q_p'(u)}{Q_q'(u)}\big)^{\alpha}-{\alpha}\log\frac{Q_p'(u)}{Q_q'(u)}-1\Big){du}, &\qquad\textrm{if $\alpha\neq 0$;}\\
&\frac{1}{2}\int_0^1 \left|\log\frac{Q_p'(u)}{Q_q'(u)}\right|^2{du},&\qquad\textrm{if $\alpha=0$.}
\end{aligned}\right.
\end{equation}
where $Q_p$, $Q_q$ are quantile functions of densities $p$, $q$, respectively, and $Q'_p$, $Q'_q$ are derivatives of quantile functions, namely quantile density functions. We note that the quantile function is the inverse function of cumulative distribution function. We remark that compared with {alpha divergence}s in $L^2$ space, the transport {alpha divergence} studies the difference between quantile density functions, instead of probability density functions.  

In literature, several joint studies exist among information geometry, optimal transport, and {alpha divergence}s {\cite{NA, KM, RW, Wong}}. For example, {{{\cite{Wong}} uses OT duality to generalize the Bregman divergence}}. \cite{KM} studies the matrix decomposition viewpoint for different information metrics on Gaussian families. 
{Recently, \cite{NA} also studies Wasserstein-2 metric with general Riemannian ground metric, and then study canniocal divergences from Wasserstein-2 distances.} Compared to the above studies, we focus on the Hessian metric of the negative Boltzman-Shannon entropy in Wasserstein-$2$ space. In this paper, we apply Hessian structures \cite{IG,SY} to construct divergence functionals in terms of {derivatives of transport maps}; see related works in \cite{LiG, LiG1, LiGB1,LiGB2}.  

This paper is organized as follows. In section \ref{sec2}, we briefly review the definition of classical {alpha divergence} in {positive} octant and its relation with information geometry methods. In section \ref{sec3}, we first construct transport {alpha divergence} in one-dimensional sample space, {whose Taylor expansion} shows both the Hessian metric and the 3-symmetric tensor in Wasserstein-$2$ space. Properties of transport {alpha divergence}s, including generalized Bregman divergences and Pythagorean theorem in Wasserstein-$2$ space, are discussed. Several analytical formulas in generative models are provided in section \ref{sec4}.

\section{Divergence functions and Information geometry methods}\label{sec2}
In this section, we briefly review {alpha divergences} in positive octant, and present information geometry methods for studying these divergences {\cite{IG, IG1, IG2}}. {E.g., we mainly follow \cite[section 2.5.2]{IG2} for the definition of alpha connection.} 

Denote a 
$d$--dimensional positive octant by $(0,+\infty)^d$. For any vectors $m=(m_i)_{i=1}^d$, $n=(n_i)_{i=1}^d\in (0, +\infty)^d$, the {alpha divergence} is defined by  
{
\begin{equation*}
\mathrm{D}_\alpha[m\|n]=\sum_{i=1}^d f_\alpha\left(\frac{m_i}{n_i}\right)n_i, 
\end{equation*}}
where $f_\alpha\colon \mathbb{R}_+\rightarrow\mathbb{R}_+$ is a function parameterized by a scalar $\alpha$, such that   
\begin{equation*}
f_\alpha (z)=\left\{\begin{aligned}
\frac{4}{1-\alpha^2}\Big(\frac{1-\alpha}{2}+\frac{1+\alpha}{2}z-z^{\frac{1+\alpha}{2}}\Big), & \quad\alpha\neq\pm 1;\\
 z\log z-(z-1), &\quad \alpha=1; \\ 
 -\log z+(z-1), & \quad \alpha=-1. 
\end{aligned}\right.
\end{equation*}
Here $\log$ is the natural logarithm function. 
The {alpha divergence} is a distance-like function, namely divergence or contrast function that satisfies the following properties. 
{\begin{equation*}
\mathrm{D}_\alpha[m\|n]\geq 0; \quad \mathrm{D}_{\alpha}[m\|n]=0, \quad \textrm{iff}\quad m=n. 
\end{equation*}}
We note that, in general when $\alpha\neq 0$, {alpha divergence} is not a distance function, since {$\mathrm{D}_\alpha[m\|n]\neq \mathrm{D}_\alpha[n\|m]$}. The following dual relation holds 
{
\begin{equation*}
\mathrm{D}_{\alpha}[m\|n]=\mathrm{D}_{-\alpha}[n\|m]. 
\end{equation*}}
There are three important examples of {alpha divergence}s, widely used in statistical inference applications. 
{
\begin{itemize}
\item[(i)] $\alpha=0$: Squared Hellinger distance (up to a scaling factor)
\begin{equation*}
\mathrm{D}_0[m\|n]=2\sum_{i=1}^d(\sqrt{m_i}-\sqrt{n_i})^2.
\end{equation*}
\item[(ii)] $\alpha=1$: Kullback-Leibler (KL) divergence
\begin{equation*}
\mathrm{D}_{1}[m\|n]=\sum_{i=1}^dm_i\log\frac{m_i}{n_i}-(m_i-n_i).
\end{equation*}
\item[(iii)] $\alpha=3$: Chi-squared divergence 
\begin{equation*}
\mathrm{D}_{3}[m\|n]=\frac{1}{2}\sum_{i=1}^d\frac{(m_i-n_i)^2}{n_i}.
\end{equation*}
\end{itemize}}

The {alpha divergence} has several important properties from the Hessian structure of an entropy function, including Taylor expansions and {alpha geodesic}s. Denote a finite dimensional Boltzman-Shannon entropy function by $H(m)=-\sum_{i=1}^n m_i\log m_i$.  Denote the Hessian matrix of negative $H$, also named Fisher matrix, by 
\begin{equation*}
g_{ij}(m)=-\frac{\partial^2}{\partial m_i\partial m_j}H(m)=\frac{1}{m_i}\delta_{ij}, \quad\textrm{for $i,j\in\{1,\cdots, d\}$;}
\end{equation*}
and denote the third derivative of $H$ by a 3--symmetric tensor, known as Amari-Chentsov tensor, 
\begin{equation*}
T_{ijk}(m)=\frac{\partial^3}{\partial m_i\partial m_j\partial m_k}H(m)=\frac{1}{m_i^2}\delta_{ij}\delta_{ik},\quad\textrm{for $i,j,k\in\{1,\cdots, d\}$,}
\end{equation*}
where $\delta_{ij}$ is a Kronecker delta function. The above Hessian matrix and $3$-tensor are useful in {studying the} {alpha divergence}s.
 
Firstly, the Taylor expansion of {alpha divergences} hold:
\begin{equation*}
\begin{split}
\mathrm{D}_\alpha[m\|n]=&\qquad\frac{1}{2}\sum_{i,j=1}^dg_{ij}(n)(m_i-n_i)(m_j-n_j)\\
&+\frac{\alpha-3}{12}\sum_{i,j,k=1}^dT_{ijk}(n)(m_i-n_i)(m_j-n_j)(m_k-n_k)+O(|m-n|^4),
\end{split}
\end{equation*}
where $|\cdot|$ is the Euclidean norm in $\mathbb{R}^d$. Secondly, there are a pair of dual geodesics, namely $\pm$alpha geodesics. {We suggest readers \cite{IG2} for detailed derivations of alpha geodesics and connections.} 
Denote the $\alpha$-connection at a point $m\in \mathbb{R}_+^d$ by a three index {symbol}
\begin{equation*}
 \Gamma^{k, \alpha}_{ij}(m)=-\frac{1+\alpha}{2} m_i\cdot T_{ijk}(m).
\end{equation*}
Then the {alpha geodesic} is given below. Denote $\gamma_\alpha(t)\in\mathbb{R}_+^d$, $t\in[0,1]$, with both initial and terminal points $\gamma_\alpha(0)=m$, $\gamma_\alpha(1)=n$, and 
\begin{equation}\label{aODE}
\frac{d^2}{dt^2}\gamma_\alpha(t)_k+\sum_{i,j=1}^d \Gamma^{k, \alpha}_{ij}(\gamma_\alpha(t)) \frac{d}{dt}\gamma_\alpha(t)_i\frac{d}{dt}\gamma_\alpha(t)_j=0. 
\end{equation}
Note that the above ODE has a closed-form solution after a change variable, namely $\alpha$-representation
\begin{equation}\label{alpha}
k_\alpha(z)=\left\{\begin{aligned}
&\frac{2}{1-\alpha}(z^{\frac{1-\alpha}{2}}-1), &\quad \alpha\neq 1;\\
&\log z, & \quad\alpha=1. 
\end{aligned}\right.
\end{equation}
Hence $\frac{d^2}{dt^2}k_\alpha(\gamma_\alpha(t))=0$. Thus, if $\alpha\neq 1$, the solution of {alpha geodesic} satisfies 
\begin{equation*}
\gamma_\alpha(t)=\Big((1-t) m^{\frac{1-\alpha}{2}}+t n^{\frac{1-\alpha}{2}}\Big)^{\frac{2}{1-\alpha}}.
\end{equation*}
{In above formula, the power and product are
componentwise. In other words, $\gamma_\alpha(t)_k=\Big((1-t) m_k^{\frac{1-\alpha}{2}}+t n_k^{\frac{1-\alpha}{2}}\Big)^{\frac{2}{1-\alpha}}$.}
If $\alpha=-1$, then ODE \eqref{aODE}'s solution is named the mixture (m)-geodesics:
\begin{equation*}
\gamma_{-1}(t)=(1-t)m+t n.
\end{equation*}
If $\alpha=1$, then ODE \eqref{aODE}'s solution is called the exponential (e)-geodesics:
\begin{equation*}
\gamma_1(t)=m^{(1-t)}n^t. 
\end{equation*}
If $\alpha=0$, then \eqref{aODE}'s solution is the Riemannian geodesic of Fisher metric in positive octant:
\begin{equation*}
\gamma_0(t)=\Big((1-t) m^{\frac{1}{2}}+t n^{\frac{1}{2}}\Big)^2.
\end{equation*}

With above defined {alpha geodesic}s, there are duality properties of {alpha divergence}s, including Bregman divergences in terms of $\alpha$-representations \eqref{alpha}, and generalized Pythagorean theorem. In literature \cite{IG1,IG2,SY, Zhang}, $(\mathbb{R}_+^d, g, T)$ is an example of $\alpha$-geometry, or Hessian structure of entropy function $H$. 

{ Two remarks are presented here. The $\alpha$-geometry in positive probability measures are in general different from the alpha geometry in positive measures, where there is a projection to be studied on positive probability measures.}
{In addition, the information geometry method is not limit to the entropy function. See details in \cite{IG2}.}  

\section{Transport {alpha divergence}s}\label{sec5}
In this section, we define {alpha divergence}s in {one dimensional Wasserstein-$2$ space}. Several properties are presented, including composite Bregman divergences and generalized Pythagorean theorem in Wasserstein-$2$ space. We also define the {alpha geodesic} for the completeness of the result.
 
\subsection{Review of Wasserstein-$2$ distances}
We briefly review some basic facts on Wasserstein-$2$ distances. {We only consider the one dimensional case. In this case, the Wasserstein distance will be represented by quantile functions \cite{GD}.} 

Denote a one-dimensional sample space $\Omega=\mathbb{R}^1$ with the {Euclidean distance}. Write a strictly positive probability density space by 
\begin{equation*}
\mathcal{P}(\Omega)=\Big\{p\in C(\Omega)\colon \int_\Omega p(x)dx=1,~p(x)> 0\Big\}.
\end{equation*}
where $\int$, $dx$ are standard integration symbols in $1D$. For any two continuous probability densities $p$, $q\in\mathcal{P}(\Omega)$ with finite second moments, the Wasserstein-$2$ distance \cite{Villani2009_optimal} is defined in the Monge problem by: 
\begin{equation}\label{OT}
\textrm{W}_2(p,q):=\inf_{T\colon \Omega\rightarrow\Omega}\sqrt{\int_\Omega |T(x)-x|^2q(x)dx},
\end{equation}
where the infimum is taken over the continuous transport map function $T$ that pushforwards $q$ to $p$. {When {transport map function $T$} is differentiable}, $T_\#q=p$ means that the Monge-Amper{\'e} equation holds: 
\begin{equation}\label{MA}
p(T(x))\cdot T'(x)=q(x).
\end{equation}

In one-dimensional space, the {optimal transport map function} $T$ is {increasing}, which can be solved analytically in terms of quantile functions. From now on, we denote the cumulative distribution functions (CDFs) $F_p$, $F_q$ of probability density function $p$, $q$, such that 
\begin{equation*}
F_p(x)=\int_{-\infty}^xp(y)dy, \quad F_q(x)=\int_{-\infty}^xq(y)dy. 
\end{equation*}
Denote the quantile functions of probability density $p$, $q$ by 
\begin{equation*}
\begin{split}
Q_p(u)=&\inf\{x\in \mathbb{R}\colon u\leq F_p(x)\}=F_p^{-1}(u),\\
Q_q(u)=&\inf\{x\in \mathbb{R}\colon u\leq F_q(x)\}=F_q^{-1}(u).
\end{split}
\end{equation*}
{Note that $F_p$ and $F_q$ are strictly increasing functions}. We write $F_p^{-1}$, $F_q^{-1}$ are inverse CDFs of $p$, $q$, respectively. We are ready to solve equation \eqref{MA}. Taking the integration on both sides of equation \eqref{MA} w.r.t. $x$, we have
\begin{equation*}
F_p(T(x))=F_q(x).
\end{equation*} 
{Thus, the optimal transport map function} satisfies 
\begin{equation}\label{sMA}
T(x):=F_p^{-1}(F_q(x))=Q_p(F_q(x)). 
\end{equation}
From now on, we always use $T(x)$ to represent {the optimal transport map}. Equivalently, the squared Wasserstein-$2$ distance can be formulated as follows. 
\begin{equation*}
\begin{split}
\textrm{W}_2(p,q)^2=&{\int_\Omega |Q_p(F_q(x))-x|^2q(x)dx}={\int_0^1|Q_p(u)-Q_q(u)|^2du},
\end{split}
\end{equation*}
 where we apply the change of variable $u=F_q(x)\in [0,1]$ in the second equality. In other words, the Wasserstein-$2$ distance in one dimension is the $L^2$ distance in quantile functions.
{In other words, the one-dimensional Wasserstein-$2$ geometry is flat since {$(\mathcal{P}(\Omega), W_2)$} is isometric to a convex subset of $L^2([0,1])$.}

\subsection{Transport {alpha divergence}s}
We are ready to define transport {alpha divergence}. 
Denote a one-parameter function $f_{\mathrm{T},\alpha}\colon \mathbb{R}_+\rightarrow\mathbb{R}_+$ by 
\begin{equation*}
f_{\mathrm{T},\alpha}(z)=\left\{\begin{aligned}
&\frac{1}{\alpha^2}\left(z^{\alpha}-\alpha\log z-1\right),&\quad \mathrm{if}\quad \alpha\neq 0;\\
&\frac{1}{2}|\log z|^2, &\quad \mathrm{if}\quad \alpha=0.
\end{aligned}\right.
\end{equation*}

\begin{definition}[Transport {alpha divergence}]
Define the functional $\mathrm{D}_{\rt,\alpha}\colon \mathcal{P}(\Omega)\times\mathcal{P}(\Omega)\rightarrow\mathbb{R}$ by 
\begin{equation*}
\mathrm{D}_{\rt,\alpha}(p\|q):=\int_\Omega f_{\mathrm{T},\alpha}(T'(x))q(x)dx=\int_\Omega f_{\mathrm{T},\alpha}\left(\frac{q(x)}{p(T(x))}\right)q(x)dx, 
\end{equation*}
where $T$ is the monotone transport map function that pushforwards $q$ to $p$, such that $T_\#q=p$. We name $\mathrm{D}_{\rt, \alpha}$ the transport {alpha divergence}. 
\end{definition}
{From now on, we assume that $f_{\mathrm{T}, \alpha}(T'(x))$ is integrable with the weight function $q(x)$. This ensures that $\mathrm{D}_{\rt,\alpha}(p\|q)<+\infty$.}
We can represent the transport {alpha divergence} in terms of quantile density functions (QDFs). 
Denote the QDFs of probability densities $p$ and $q$ below.  
\begin{equation*}
Q_p'(u)=\frac{d}{du}Q_p(u), \quad Q'_q(u)=\frac{d}{du}Q_{q}(u). 
\end{equation*}
\begin{proposition}
The following equation holds:
\begin{equation}\label{ta1}
\mathrm{D}_{\rt, \alpha}(p\|q)=\int_0^1 f_{\mathrm{T},\alpha}\left(\frac{Q'_p(u)}{Q'_q(u)}\right)du.
\end{equation}
\end{proposition}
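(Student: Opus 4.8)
The plan is to reduce the spatial integral defining $\mathrm{D}_{\rt,\alpha}(p\|q)$ over $\Omega$ to the unit-interval integral \eqref{ta1} by the single change of variable $u=F_q(x)$. Since $F_q$ is a strictly increasing CDF mapping $\Omega$ bijectively onto $(0,1)$, with $du = F_q'(x)\,dx = q(x)\,dx$, the weight $q(x)\,dx$ appearing in the definition is exactly the pushforward measure $du$. The only remaining task is then to rewrite the argument $T'(x)$ of $f_{\mathrm{T},\alpha}$ as the ratio of quantile density functions $Q_p'(u)/Q_q'(u)$ evaluated at $u=F_q(x)$.

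For this, I would first differentiate the explicit formula \eqref{sMA} for the optimal map, $T(x)=Q_p(F_q(x))$. The chain rule gives
$$T'(x) = Q_p'(F_q(x))\,F_q'(x) = Q_p'(F_q(x))\,q(x).$$
Next I would express $q(x)$ through the quantile density of $q$. Differentiating the defining identity $Q_q(F_q(x))=x$ yields $Q_q'(F_q(x))\,q(x)=1$, hence $q(x)=1/Q_q'(F_q(x))$. Substituting this into the previous display gives
$$T'(x) = \frac{Q_p'(F_q(x))}{Q_q'(F_q(x))}.$$
Writing $u=F_q(x)$, this is precisely $T'(x)=Q_p'(u)/Q_q'(u)$, which is moreover consistent with the Monge--Amp\`ere form $T'(x)=q(x)/p(T(x))$ recorded in the definition.

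Combining the two ingredients, the change of variable $u=F_q(x)$ transforms
$$\mathrm{D}_{\rt,\alpha}(p\|q)=\int_\Omega f_{\mathrm{T},\alpha}(T'(x))\,q(x)\,dx = \int_0^1 f_{\mathrm{T},\alpha}\Big(\frac{Q_p'(u)}{Q_q'(u)}\Big)\,du,$$
which is the claimed identity \eqref{ta1}. The argument is essentially routine, so I do not expect a genuine obstacle; the only points requiring care are justifying that the substitution is a valid bijection onto $[0,1]$ (guaranteed by the strict monotonicity of $F_q$) and that all the derivatives involved exist and the ratio $Q_p'/Q_q'$ is well-defined and strictly positive (guaranteed by the continuity and strict positivity of densities in $\mathcal{P}(\Omega)$, so that $f_{\mathrm{T},\alpha}$ is always evaluated on its domain $\mathbb{R}_+$ for every $\alpha$).
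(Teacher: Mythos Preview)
Your proof is correct and follows essentially the same route as the paper: both perform the change of variable $u=F_q(x)$ (so that $q(x)\,dx=du$) and rewrite $T'(x)$ as $Q_p'(u)/Q_q'(u)$ via the chain rule applied to $T(x)=Q_p(F_q(x))$ together with the inverse-function relation $Q_q'(F_q(x))=1/q(x)$. The only cosmetic difference is that you obtain the latter by differentiating $Q_q(F_q(x))=x$, while the paper writes it in Leibniz notation as $1/\frac{dF_q}{dx}=\frac{d}{du}Q_q(u)$.
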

\begin{proof}
Denote a variable $u=F_q(x)$, $u\in [0,1]$. Thus, by changing $x$ to $u$ in the following integration, we have
\begin{equation*}
\begin{split}
\int_\Omega f_{\mathrm{T},\alpha}(T'(x))q(x)dx=&\int_\Omega  f_{\mathrm{T},\alpha}\left(\frac{d}{dx}Q_p(F_q(x))\right)q(x)dx\\
=&\int_\Omega  f_{\mathrm{T},\alpha}\Big(\frac{\frac{d}{du}Q_p(u)|_{u=F_q(x)}}{1/\frac{dF_q(x)}{dx}}\Big)q(x)dx\\
=&\int_0^1  f_{\mathrm{T},\alpha}\left(\frac{Q'_p(u)}{Q'_q(u)}\right)du,
\end{split}
\end{equation*}
where the last equality applies the chain rule that $1/\frac{dF_q(x)}{dx}=\frac{dx}{dF_q(x)}=\frac{d}{du}Q_q(u)$. This finishes the proof. 
\end{proof}
\begin{remark}
{ We provide a sufficient condition that $\mathrm{D}_{\rt, \alpha}(p\|q)$ is finite. Assume that there exists constants $C_{p,q}>c_{p,q}>0$, such that $\frac{Q'_p(u)}{Q'_q(u)}\in [c_{p,q}, C_{p,q}]$, for almost everywhere $u\in [0, 1]$. Then $\mathrm{D}_{\rt, \alpha}(p\|q)<+\infty$. One example of this condition is the Cauchy distribution; see Example \ref{ex3}.}
\end{remark}

We next present several examples of transport {alpha divergence}s. 
\begin{itemize}
\item[(i)] $\alpha=1$: transport KL divergence \cite{LiGB1}
\begin{equation*}
\mathrm{D}_{\rt,1}(p\|q)=\int_0^1\Big(\frac{Q_p'(u)}{Q_q'(u)}-\log\frac{Q_p'(u)}{Q_q'(u)}-1\Big)du.
\end{equation*}
\item[(ii)] $\alpha=-1$: transport reverse KL divergence
\begin{equation*}
\mathrm{D}_{\rt,-1}(p\|q)=\int_0^1\Big(\frac{Q_q'(u)}{Q_p'(u)}-\log\frac{Q_q'(u)}{Q_p'(u)}-1\Big)du.
\end{equation*}
\item[(iii)] $\alpha=0$:  transport Hessian distance \cite{LiGB2} (up to a scaling factor)
 \begin{equation*}
\mathrm{D}_{\rt,0}(p\|q)=\frac{1}{2}\int_0^1 \left|\log\frac{Q_p'(u)}{Q_q'(u)}\right|^2du.
\end{equation*}
\end{itemize}
We also present transport {alpha divergence}s with $\alpha=\pm 3$. 
\begin{itemize}
\item[(iv)] $\alpha=3$: transport Chi-square divergence  
\begin{equation*}
\mathrm{D}_{\rt, 3}(p\|q)=\frac{1}{9}\int_0^1\Big(\big(\frac{Q_p'(u)}{Q_q'(u)}\big)^{3}-{3}\log\frac{Q_p'(u)}{Q_q'(u)}-1\Big)du.
\end{equation*}
\item[(v)] $\alpha=-3$: transport inverse Chi-square divergence
\begin{equation*}
\mathrm{D}_{\rt, -3}(p\|q)=\frac{1}{9}\int_0^1\Big(\big(\frac{Q_q'(u)}{Q_p'(u)}\big)^{3}-{3}\log\frac{Q_q'(u)}{Q_p'(u)}-1\Big)du.
\end{equation*}
\end{itemize}

\subsection{Properties}
In this section, we show that there are several dualities and convexity properties for transport {alpha divergence}s. These proofs are based on the facts that transport {alpha divergence}s are generalized Bregman divergences in Wasserstein-$2$ space.  
\begin{proposition}[{Positivity} and Duality]
For any $\alpha\in\mathbb{R}$, and $p$, $q\in\mathcal{P}(\Omega)$, the following properties hold: 
\begin{itemize}
\item[(i)] Positivity: 
\begin{equation*}
\mathrm{D}_{\mathrm{T}, \alpha}(p\|q)\geq 0.  
\end{equation*}
In addition, $\mathrm{D}_{\mathrm{T},\alpha}(p\|q)=0$ if and only if there exists a constant $c\in\mathbb{R}$, such that
\begin{equation*}
 p(x+c)=q(x). 
\end{equation*}
\item[(ii)] Duality: 
\begin{equation*}
\mathrm{D}_{\mathrm{T}, \alpha}(p\|q)=\mathrm{D}_{\mathrm{T}, -\alpha}(q\|p).
\end{equation*}
\end{itemize}
\end{proposition}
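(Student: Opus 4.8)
The plan is to reduce both statements to pointwise properties of the scalar generator $f_{\mathrm{T},\alpha}$, using the quantile-density representation \eqref{ta1}, namely $\mathrm{D}_{\mathrm{T},\alpha}(p\|q)=\int_0^1 f_{\mathrm{T},\alpha}(Q_p'(u)/Q_q'(u))\,du$. Since the integrand depends only on the single ratio $z=Q_p'(u)/Q_q'(u)>0$, everything follows from analyzing $f_{\mathrm{T},\alpha}$ on $\mathbb{R}_+$ and then translating the conclusion back to densities.

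For the negativity claim (i), I would first show that $f_{\mathrm{T},\alpha}(z)\ge 0$ for all $z>0$, with equality if and only if $z=1$. For $\alpha\neq 0$ write $f_{\mathrm{T},\alpha}(z)=\alpha^{-2}g(z)$ with $g(z)=z^\alpha-\alpha\log z-1$; a direct computation gives $g'(z)=\alpha z^{-1}(z^\alpha-1)$, so $z=1$ is the unique critical point, while $g''(1)=\alpha^2>0$ and $g(1)=0$ identify it as the global minimum (the sign pattern of $g'$ around $1$ is the same for $\alpha>0$ and $\alpha<0$). For $\alpha=0$ the claim is immediate from $f_{\mathrm{T},0}(z)=\tfrac12|\log z|^2$. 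Integrating over $u$ then yields $\mathrm{D}_{\mathrm{T},\alpha}(p\|q)\ge 0$. Since $p,q$ are continuous and strictly positive, the integrand is continuous and nonnegative, so $\mathrm{D}_{\mathrm{T},\alpha}(p\|q)=0$ forces $Q_p'(u)=Q_q'(u)$ for every $u\in[0,1]$, hence $Q_p(u)=Q_q(u)+c$ for a single constant $c$.

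The step I expect to be the main (if mild) obstacle is converting this quantile identity into the stated condition $p(x+c)=q(x)$. Here I would apply $F_p$ to $Q_p(u)=Q_q(u)+c$ and use $F_p(Q_p(u))=u=F_q(Q_q(u))$ to obtain $F_p(x+c)=F_q(x)$ for all $x$ (setting $x=Q_q(u)$ and letting $u$ range over $(0,1)$), and then differentiate to get $p(x+c)=q(x)$; conversely, a translation leaves the quantile densities unchanged and makes the divergence vanish. This reflects the translation invariance inherent in the transport structure, and is the only place where the argument is genuinely more than a computation.

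For the duality claim (ii), the key is the pointwise identity $f_{\mathrm{T},\alpha}(z)=f_{\mathrm{T},-\alpha}(1/z)$: for $\alpha\neq0$ one checks $f_{\mathrm{T},-\alpha}(1/z)=\alpha^{-2}\big((1/z)^{-\alpha}+\alpha\log(1/z)-1\big)=\alpha^{-2}(z^\alpha-\alpha\log z-1)=f_{\mathrm{T},\alpha}(z)$, and the $\alpha=0$ case follows from $|\log(1/z)|^2=|\log z|^2$. Applying this with $z=Q_p'(u)/Q_q'(u)$ inside \eqref{ta1} gives $\mathrm{D}_{\mathrm{T},\alpha}(p\|q)=\int_0^1 f_{\mathrm{T},-\alpha}(Q_q'(u)/Q_p'(u))\,du=\mathrm{D}_{\mathrm{T},-\alpha}(q\|p)$, which is exactly the asserted relation.
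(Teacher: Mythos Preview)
Your proposal is correct and follows essentially the same route as the paper: both reduce (i) and (ii) to pointwise properties of the scalar generator $f_{\mathrm{T},\alpha}$, then integrate using the quantile-density representation \eqref{ta1}. The only cosmetic differences are that the paper gets nonnegativity of $f_{\mathrm{T},\alpha}$ by the substitution $w=z^\alpha$ and the classical inequality $w-\log w-1\ge 0$ (rather than your direct calculus on $g$), and handles the equality case via the transport map $T'(x)=1\Rightarrow T(x)=x+c$ instead of your quantile-function identity $Q_p'=Q_q'\Rightarrow Q_p=Q_q+c$; these are the same argument under the change of variable $u=F_q(x)$.
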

\begin{proof}
(i) For $\alpha\neq 0$, note that $x-\log x-1\geq 0$ when $x>0$. Thus,
\begin{equation*}
f_{\mathrm{T}, \alpha}(z)=\frac{1}{\alpha^2}(z^\alpha-\log z^\alpha-1)\geq 0.
\end{equation*}
Since $q>0$, we have $\mathrm{D}_{\mathrm{T}, \alpha}(p\|q)\geq 0$. If $\mathrm{D}_{\mathrm{T}, \alpha}(p\|q)=0$, we have $f_{\mathrm{T},\alpha}(T_p'(x))=0$. Note that $x-\log x-1=0$ iff $x=1$. Thus, $T_p'(x)=1$. This means that $T(x)=x+c$, where $c$ is a constant. From $(T_p)_\#q=p$, we prove (i) with $\alpha\neq 0$. Similarly, we can prove the result for $\alpha\neq 0$. 
 
(ii) The duality is from equation \eqref{ta1}. For any $z_1$, $z_2>0$, we have $f_{\mathrm{T},\alpha}(\frac{z_1}{z_2})=f_{\mathrm{T},-\alpha}(\frac{z_2}{z_1})$. {That is} 
{
\begin{equation*}
\begin{split}
f_{\mathrm{T},\alpha}\left(\frac{z_1}{z_2}\right)=&\left\{\begin{aligned}
&\frac{1}{\alpha^2}\left((\frac{z_1}{z_2})^{\alpha}-\alpha\log \frac{z_1}{z_2}-1\right),&\quad \mathrm{if}\quad \alpha\neq 0;\\
&\frac{1}{2}|\log \frac{z_1}{z_2}|^2, &\quad \mathrm{if}\quad \alpha=0,
\end{aligned}\right.\\
=&\left\{\begin{aligned}
&\frac{1}{\alpha^2}\left((\frac{z_2}{z_1})^{-\alpha}-(-\alpha)\log \frac{z_2}{z_1}-1\right),&\quad \mathrm{if}\quad \alpha\neq 0;\\
&\frac{1}{2}\left|\log \frac{z_2}{z_1}\right|^2, &\quad \mathrm{if}\quad \alpha=0,
\end{aligned}\right.\\
=&f_{\mathrm{T},-\alpha}(\frac{z_2}{z_1}). 
\end{split}
\end{equation*}
}
Replacing $z_1$, $z_2$ by QDFs $Q'_p$, $Q'_q$, respectively, and using \eqref{ta1}, we finish the proof. 

\end{proof}

\begin{proposition}[Taylor expansions in Wasserstein-$2$ space]\label{prop3}
The following equation holds:
\begin{equation*}
\begin{split}
\mathrm{D}_{\rt,\alpha}(p\|q)=&\frac{1}{2}\int_0^1 \left|\frac{Q'_p(u)-Q'_q(u)}{Q'_q(u)}\right|^2du+\frac{\alpha-3}{6}\int_0^1 \Big(\frac{Q'_p(u)-Q'_q(u)}{Q'_q(u)}\Big)^3du\\
&+{O\left(\int_0^1 |\frac{Q'_p(u)-Q'_q(u)}{Q'_q(u)}|^4 du\right)}. 
\end{split}
\end{equation*}
\end{proposition}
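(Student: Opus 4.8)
The plan is to reduce the claimed integral expansion to a pointwise Taylor expansion of the scalar generating function $f_{\mathrm{T},\alpha}$ about $z=1$, followed by integration in $u$. Starting from the quantile representation \eqref{ta1}, I would introduce the relative deviation
\[
w(u)=\frac{Q'_p(u)-Q'_q(u)}{Q'_q(u)}, \qquad z(u)=\frac{Q'_p(u)}{Q'_q(u)}=1+w(u),
\]
so that $\mathrm{D}_{\rt,\alpha}(p\|q)=\int_0^1 f_{\mathrm{T},\alpha}(1+w(u))\,du$. Since the target quadratic and cubic coefficients are $\tfrac12$ and $\tfrac{\alpha-3}{6}$, it suffices to establish $f_{\mathrm{T},\alpha}(1)=f_{\mathrm{T},\alpha}'(1)=0$, $f_{\mathrm{T},\alpha}''(1)=1$, and $f_{\mathrm{T},\alpha}'''(1)=\alpha-3$, and then invoke the scalar Taylor theorem.

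For $\alpha\neq 0$ I would differentiate $f_{\mathrm{T},\alpha}(z)=\alpha^{-2}(z^\alpha-\alpha\log z-1)$ directly, obtaining $f'(z)=\alpha^{-1}(z^{\alpha-1}-z^{-1})$, $f''(z)=\alpha^{-1}\big((\alpha-1)z^{\alpha-2}+z^{-2}\big)$, and $f'''(z)=\alpha^{-1}\big((\alpha-1)(\alpha-2)z^{\alpha-3}-2z^{-3}\big)$. Evaluating at $z=1$ gives $f(1)=0$, $f'(1)=0$, $f''(1)=1$, and $f'''(1)=\alpha^{-1}\big[(\alpha-1)(\alpha-2)-2\big]=\alpha^{-1}(\alpha^2-3\alpha)=\alpha-3$. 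Hence $f_{\mathrm{T},\alpha}(1+w)=\tfrac12 w^2+\tfrac{\alpha-3}{6}w^3+O(w^4)$, and substituting $w=w(u)$ and integrating over $[0,1]$ reproduces exactly the three displayed terms. The case $\alpha=0$ follows by the same expansion applied to $f_{\mathrm{T},0}(z)=\tfrac12(\log z)^2$: using $\log(1+w)=w-\tfrac12 w^2+O(w^3)$ yields $\tfrac12(\log(1+w))^2=\tfrac12 w^2-\tfrac12 w^3+O(w^4)$, which is consistent with $\tfrac{\alpha-3}{6}=-\tfrac12$ at $\alpha=0$; alternatively one can argue by continuity of the coefficients in $\alpha$.

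The step I would flag as the main technical point is the meaning and uniformity of the remainder $\int_0^1 O(|w(u)|^4)\,du$. The Lagrange form of the remainder involves $f^{(4)}_{\mathrm{T},\alpha}$ at an intermediate point between $1$ and $z(u)$, so to factor a single $O(\cdot)$ out of the integral I would assume that the ratio $Q'_p/Q'_q$ is bounded away from $0$ and $\infty$ uniformly in $u$, i.e. $w$ ranges in a fixed compact subinterval of $(-1,\infty)$. Under this regularity hypothesis $f^{(4)}_{\mathrm{T},\alpha}$ is bounded on the relevant interval, and the cubic remainder is dominated by $C\int_0^1|w(u)|^4\,du$ with $C$ independent of $u$; this is the interpretation implicit in the statement and the place where an explicit assumption on $p$ and $q$ is needed. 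Every remaining step is a routine substitution and integration and presents no difficulty.
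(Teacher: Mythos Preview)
Your proposal is correct and follows essentially the same route as the paper: write $\mathrm{D}_{\rt,\alpha}(p\|q)=\int_0^1 f_{\mathrm{T},\alpha}(1+h(u))\,du$ with $h(u)=(Q'_p-Q'_q)/Q'_q$, Taylor expand $f_{\mathrm{T},\alpha}$ at $1$, and read off $f_{\mathrm{T},\alpha}(1)=f_{\mathrm{T},\alpha}'(1)=0$, $f_{\mathrm{T},\alpha}''(1)=1$, $f_{\mathrm{T},\alpha}'''(1)=\alpha-3$. Your explicit derivative computations, the separate treatment of $\alpha=0$, and the discussion of the uniform boundedness of $Q'_p/Q'_q$ needed to control the Lagrange remainder are all additional detail beyond what the paper supplies, but the argument is the same.
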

\begin{proof}
We note that 
\begin{equation*}
f_{\mathrm{T},\alpha}\left(\frac{Q'_p(u)}{Q'_q(u)}\right)=f_{\mathrm{T},\alpha}(1+h(u)),
\end{equation*}
where we denote a function $h(u):=\frac{Q'_p(u)-Q'_q(u)}{Q'_q(u)}$. By applying a Taylor expansion on $f_{\mathrm{T},\alpha}$ at $1$, we obtain 
\begin{equation*}
f_{\mathrm{T},\alpha}(1+h(u))=f_{\mathrm{T},\alpha}(1)+f'_{\mathrm{T},\alpha}(1)h(u)+\frac{1}{2}f_{\mathrm{T},\alpha}''(1) |h(u)|^2+\frac{1}{6}f_{\mathrm{T},\alpha}'''(1)h(u)^3+O(|h(u)|^4). 
\end{equation*} 
Note that $f_{\mathrm{T},\alpha}(1)=f_{\mathrm{T},\alpha}'(1)=0$, $f_{\mathrm{T},\alpha}''(1)=1$, and $f_{\mathrm{T},\alpha}'''(1)=\alpha-3$. We finish the proof.
\end{proof}

We next represent transport {alpha divergence}s in terms of generalized Bregman divergences in Wasserstein-$2$ space. Denote a function $\mathrm{D}_{-1}=\mathrm{D}_{\mathrm{IS}}\colon \mathbb{R}_+^2\rightarrow\mathbb{R}_+$, such that for $z_1$, $z_2\in\mathbb{R}_+$, 
{\begin{equation*}
\mathrm{D}_{-1}(z_1|z_2):=\frac{z_1}{z_2}-\log\frac{z_1}{z_2}-1. 
\end{equation*}}
The notation {$\mathrm{D}_{-1}$ is a $(-1)$-divergence}, 
which is a Bregman divergence with a potential function 
\begin{equation*}
\Psi(z):=-\log z, \quad z\in\mathbb{R}_+.
\end{equation*}
{Here $\mathrm{D}_{-1}$ is the Itakura--Saito divergence on $\mathbb{R}_+$.}
\begin{theorem}[{Alpha}--Itakura--Saito divergences in Wasserstein-$2$ space]\label{thm1}
Let $\alpha\neq 0$. The following equality holds: 
{\begin{equation*}
\begin{split}
\mathrm{D}_{\rt, \alpha}(p\|q)=&\frac{1}{\alpha^2}\int_0^1 \mathrm{D}_{-1}\left(Q_p'(u)^\alpha | Q_q'(u)^\alpha\right) du. 
\end{split}
\end{equation*}}
In addition, the following generalized Bregman divergence relation holds:
\begin{equation}\label{div}
\begin{split}
\mathrm{D}_{\rt, \alpha}(p\|q)=&\frac{1}{\alpha^2}\int_0^1 \left[\Psi(Q_p'(u)^\alpha)-\Psi\left(Q'_q(u)^\alpha\right)-\Psi'(Q'_q(u)^\alpha)\cdot\left(Q'_p(u)^\alpha-Q'_q(u)^\alpha\right)\right]du. 
\end{split}
\end{equation}
Equivalently, 
\begin{equation}\label{divt}
\begin{split}
\mathrm{D}_{\rt, \alpha}(p\|q)=&\quad\frac{1}{\alpha}\Big[\int_\Omega p(x)\log p(x)dx-\int_\Omega q(x)\log q(x)dx\Big]\\
&+\frac{1}{\alpha^2}\int_\Omega \Big((\frac{q(x)}{p(T(x))})^\alpha-1\Big)q(x)dx. 
\end{split}
\end{equation}
\end{theorem}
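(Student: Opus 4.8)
The plan is to prove the three displayed identities in order, each following from the previous one by an elementary manipulation, with the only real work being two changes of variable. I would first establish the Itakura--Saito representation by direct substitution. Writing out $\mathrm{D}_{\mathrm{IS}}$ at the arguments $Q_p'(u)^\alpha$ and $Q_q'(u)^\alpha$ and using $\log(Q_p'(u)^\alpha/Q_q'(u)^\alpha)=\alpha\log(Q_p'(u)/Q_q'(u))$, the integrand collapses to exactly $\alpha^2 f_{\mathrm{T},\alpha}(Q_p'(u)/Q_q'(u))$; dividing by $\alpha^2$ and invoking \eqref{ta1} gives the first identity.

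Second, the Bregman form \eqref{div} follows once I observe that $\mathrm{D}_{\mathrm{IS}}$ is precisely the Bregman divergence generated by $\Psi(z)=-\log z$: since $\Psi'(z)=-1/z$, the expression $\Psi(z_1)-\Psi(z_2)-\Psi'(z_2)(z_1-z_2)$ equals $-\log(z_1/z_2)+z_1/z_2-1=\mathrm{D}_{\mathrm{IS}}(z_1\|z_2)$. Substituting $z_1=Q_p'(u)^\alpha$ and $z_2=Q_q'(u)^\alpha$ into the first identity yields \eqref{div} verbatim.

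Third, to reach \eqref{divt} I would expand \eqref{div} using $\Psi(z)=-\log z$ and $\Psi'(z)=-1/z$. The logarithmic terms combine into $-\frac{1}{\alpha}\int_0^1\log(Q_p'(u)/Q_q'(u))\,du$, while the linear term collects into $\frac{1}{\alpha^2}\int_0^1\big((Q_p'(u)/Q_q'(u))^\alpha-1\big)\,du$. Here the two changes of variable enter. For the logarithmic piece I would use $Q_p'(u)=1/p(Q_p(u))$, so that $\log Q_p'(u)=-\log p(Q_p(u))$, and then substitute $x=Q_p(u)$ with $du=p(x)\,dx$ to get $\int_0^1\log Q_p'(u)\,du=-\int_\Omega p\log p\,dx$; the same computation for $q$ produces the entropy-difference term. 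For the power piece I would use the relation $T'(x)=Q_p'(u)/Q_q'(u)$ at $u=F_q(x)$ from the proof of the proposition, together with the Monge--Amp\`ere relation $T'(x)=q(x)/p(T(x))$, and substitute $u=F_q(x)$ with $du=q(x)\,dx$, converting the integral into $\frac{1}{\alpha^2}\int_\Omega\big((q(x)/p(T(x)))^\alpha-1\big)q(x)\,dx$.

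The algebraic collapses in the first two steps are routine. The only step requiring care is the change-of-variable identity $\int_0^1\log Q_p'(u)\,du=-\int_\Omega p\log p\,dx$ in step three: one must differentiate the quantile function correctly, $p(Q_p(u))Q_p'(u)=1$, and track the pushforward $du=p\,dx$ so that the measure factors line up. This is the main, though still mild, obstacle, and it is precisely what ties the quantile-density formulation back to the entropy functional.
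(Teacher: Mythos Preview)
Your proposal is correct and follows essentially the same route as the paper: derive the Itakura--Saito form from \eqref{ta1} by direct substitution, recognize $\mathrm{D}_{\mathrm{IS}}$ as the Bregman divergence of $\Psi(z)=-\log z$ to obtain \eqref{div}, and then pass to \eqref{divt} via the changes of variable $u=F_p(x)$ (for the $p$-entropy term) and $u=F_q(x)$ (for the $q$-entropy and power terms), using $Q_p'(u)=1/p(Q_p(u))$ and the Monge--Amp\`ere relation $T'(x)=q(x)/p(T(x))$. The paper's proof is organized identically, differing only in cosmetic details of presentation.
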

\begin{proof}
We first prove equation \eqref{div}. From equation \eqref{ta1}, we have 
\begin{equation*}
\mathrm{D}_{\rt, \alpha}(p\|q)=\int_0^1 f_{\mathrm{T},\alpha}\left(\frac{Q'_p(u)}{Q'_q(u)}\right)du=\frac{1}{\alpha^2}\int_0^1 {\mathrm{D}_{-1}\left(Q_p'(u)^\alpha | Q_q'(u)^\alpha\right)} du. 
\end{equation*}
{From the fact that $\mathrm{D}_{-1}$ is a Bregman divergence function, we have
\begin{equation*}
\mathrm{D}_{-1}(z_1|z_2)=\Psi(z_1)-\Psi(z_2)-\Psi'(z_2)\cdot(z_1-z_2).
\end{equation*}}
This finishes the proof of \eqref{div}. 

We next prove equation \eqref{divt}. Let $u=F_q(x)$, $x=Q_q(u)=F_q^{-1}(u)$. From the chain rule, we have $\frac{dQ_q(u)}{du}=\frac{dx}{dF_q(x)}=\frac{1}{q(x)}$, and  
\begin{equation}\label{proof}
\frac{dQ_p(u)}{du}=\frac{dQ_p(F_q(x))}{dF_q(x)}=\frac{\frac{dQ_p(F_q(x))}{dx}}{\frac{dF_q(x)}{dx}}=\frac{T'(x)}{q(x)}=\frac{1}{p(T(x))},
\end{equation}
where the last equality is from the Monge-Amper{\'e} equation \eqref{MA}. Let us apply the above estimations to equation \eqref{div}. We first observe the following fact. Let $u=F_q(x)$.  
\begin{equation*}
\begin{split}
\int_0^1\Psi\left(Q'_q(u)^\alpha\right)du=&-\alpha\int_0^1\log \left(\frac{d}{du}Q'_q(u)\right)du\\
=&-\alpha\int_0^1\log \frac{1}{q(x)}q(x)dx=\alpha\int_\Omega q(x)\log q(x)dx.
\end{split}
\end{equation*}
Similarly, let $u=F_p(x)$, we have 
\begin{equation*}
\int_0^1\Psi(Q'_p(u)^\alpha)du=\alpha\int_\Omega p(x)\log p(x)dx.
\end{equation*}
We second obtain the following fact. Let $u=F_q(x)$, we have  
\begin{equation*}
\begin{split}
\int_0^1 \Psi'(Q'_q(u)^\alpha)\cdot(Q'_p(u)^\alpha-Q'_q(u)^\alpha) du
=&-\int_0^1 \frac{1}{Q'_q(u)^\alpha}\cdot (Q'_p(u)^\alpha-Q'_q(u)^\alpha) du\\
=&-\int_\Omega \Big((\frac{q(x)}{p(T(x))})^\alpha-1\Big)q(x)dx. 
\end{split}
\end{equation*}
\end{proof}
Following Theorem \ref{thm1}, we note that the transport {alpha divergence} is a Bregman divergence in QDFs after a change of variable. We now present the generalized Pythagorean theorem. 
 Denote the Legendre transformation of function $\Psi(z)=-\log z$ below:
\begin{equation*}
\Psi^*(z^*)=\sup_{z\in\mathbb{R}} \left\{ zz^*-\Psi(z)\right\}.
\end{equation*}
Here $z^*=\Psi'(z)$, and $\Psi^*(z^*)+\Psi(z)=zz^*$. Thus, {$z^*=-\frac{1}{z}$, and $\Psi^*(z^*)=-\log (-z^*)-1$}. 

\begin{corollary}[Generalized Pythagorean theorem in Wasserstein-$2$ space]\label{col:toc}
Let $p$, $q$, $r$ be three probability density functions in $\mathcal{P}(\Omega)$. Assume that the following orthogonal condition holds:
\begin{equation}\label{orth}
\left\{\begin{aligned}
&\frac{1}{\alpha^2}\int_0^1 \big(Q'_p(u)^\alpha-Q'_q(u)^\alpha\big)\cdot\left(\frac{1}{Q'_r(u)^\alpha}-\frac{1}{Q'_q(u)^\alpha}\right) du=0, &\quad \mathrm{if}\quad\alpha\neq 0;\\
&\int_0^1 \log\frac{Q'_p(u)}{Q'_q(u)}\cdot \log\frac{Q'_r(u)}{Q'_q(u)} du=0, &\quad \mathrm{if}\quad\alpha\neq 0. 
\end{aligned}\right.
\end{equation}
Then 
\begin{equation*}
\mathrm{D}_{\mathrm{T}, \alpha}(p\|q)+\mathrm{D}_{\mathrm{T}, \alpha}(q\|r)=\mathrm{D}_{\mathrm{T}, \alpha}(p\|r). 
\end{equation*}
\end{corollary}
\begin{proof}
The proof follows from the definition of Bregman divergences. We note the fact that for $z_1$, $z_2>0$, 
\begin{equation*}
\mathrm{D}_{\mathrm{IS}}(z_1|z_2)=\Psi(z_1)+\Psi^*(z_2^*)-z_1\cdot z_2^*. 
\end{equation*}
Let $\alpha\neq 0$. Denote $K_p=Q'_p(u)^\alpha$ and $K_p^*=-\frac{1}{Q'_p(u)^\alpha}$, for any $p\in\mathcal{P}(\Omega)$. From equation \eqref{div}, we have
\begin{equation*}
\begin{split}
&\mathrm{D}_{\rt, \alpha}(p\|q)+\mathrm{D}_{\rt, \alpha}(q\|r)\\
=&\frac{1}{\alpha^2}\int_0^1 \Big[\Psi(K_p)+\Psi^*(K_q^*)-K^*_q\cdot K_p+\Psi(K_q)+\Psi^*(K_r^*)-K^*_r\cdot K_q\Big]du \\
=&\frac{1}{\alpha^2}\int_0^1 \Big[\Psi(K_p)+\Psi^*(K_r^*)-K_p\cdot K^*_r+K_p\cdot K^*_r+K_q\cdot K_q^*-K^*_q\cdot K_p-K^*_r\cdot K_q\Big]du\\
=&\mathrm{D}_{\mathrm{T},\alpha}(p\|r)+\frac{1}{\alpha^2}\int_0^1 (K_p-K_q)\cdot(K^*_r-K^*_q) du. 
\end{split}
\end{equation*}
From the orthogonal condition \eqref{orth}, we finish the proof for $\alpha\neq 0$. For $\alpha=0$, the proof is from the fact that we use the coordinate $\log Q'_p(u)$, under which the transport {alpha divergence} is an Euclidean distance. The result is easy to show. This finishes the proof.   
\end{proof}

We also present the orthogonal condition \eqref{orth} in terms of {transport maps}. 
\begin{corollary}[Transport orthogonal condition]
Orthogonal condition \eqref{orth} is equivalent to
\begin{equation*}
\left\{\begin{aligned}
&\frac{1}{\alpha^2}\int_\Omega \Big(\frac{1}{p(T_p(x))^\alpha}-\frac{1}{q(x)^\alpha}\Big)\cdot \Big(r(T_r(x))^\alpha-q(x)^\alpha\Big)q(x)dx=0, &\quad \mathrm{if}\quad \alpha\neq 0;\\
&\int_\Omega \log\frac{q(x)}{p(T_p(x))}\cdot \log \frac{q(x)}{r(T_r(x))}q(x)dx=0,  &\quad \mathrm{if}\quad \alpha=0,
\end{aligned}\right.
\end{equation*}
where $T_p$, $T_r$ are monotone functions pusforward $q$ to $p$, $r$, respectively. I.e., $(T_p)_\#q=p$, $(T_r)_\#q=r$. 
\end{corollary}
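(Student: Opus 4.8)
The plan is to reduce both orthogonality conditions to a single change of variables $u = F_q(x)$, which maps the unit interval $[0,1]$ bijectively onto $\Omega$ and satisfies $du = q(x)\,dx$. Under this substitution each quantile density function appearing in \eqref{orth} can be rewritten as a probability density evaluated along the appropriate optimal map, after which the two displays become literally the same integral and the equivalence follows in both directions.

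First I would record the three pointwise identities supplied by the earlier computations. Setting $x = Q_q(u)$, the chain rule already used in \eqref{proof} gives $Q_q'(u) = 1/q(x)$. Applying exactly the argument of \eqref{proof} with the map $T_p$ that pushes $q$ forward to $p$ yields $Q_p'(u) = 1/p(T_p(x))$, and the identical argument with $T_r$ in place of $T_p$ gives $Q_r'(u) = 1/r(T_r(x))$. Raising these to the power $\alpha$ produces $Q_q'(u)^\alpha = q(x)^{-\alpha}$, $Q_p'(u)^\alpha = p(T_p(x))^{-\alpha}$, together with $1/Q_r'(u)^\alpha = r(T_r(x))^\alpha$ and $1/Q_q'(u)^\alpha = q(x)^\alpha$.

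For $\alpha \neq 0$ I would substitute these identities, along with $du = q(x)\,dx$, directly into the first line of \eqref{orth}: the factor $Q_p'(u)^\alpha - Q_q'(u)^\alpha$ becomes $p(T_p(x))^{-\alpha} - q(x)^{-\alpha}$, the factor $1/Q_r'(u)^\alpha - 1/Q_q'(u)^\alpha$ becomes $r(T_r(x))^\alpha - q(x)^\alpha$, and the measure $du$ becomes $q(x)\,dx$, reproducing the claimed integral over $\Omega$. For $\alpha = 0$ the same substitutions turn $\log(Q_p'(u)/Q_q'(u))$ into $\log(q(x)/p(T_p(x)))$ and $\log(Q_r'(u)/Q_q'(u))$ into $\log(q(x)/r(T_r(x)))$, again against $q(x)\,dx$, giving the second display. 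Since $u = F_q(x)$ is a strictly increasing bijection, vanishing of one integral is equivalent to vanishing of the other.

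I expect the only genuine obstacle to be bookkeeping rather than analysis: one must keep in mind that all three quantile densities are to be read off at the common argument $u = F_q(x)$, which forces the use of the two maps $T_p$ and $T_r$ that both emanate from $q$ (and not, for instance, a direct map between $p$ and $r$). Once this consistency is respected, the identity $Q_p'(u) = 1/p(T_p(x))$ from \eqref{proof} and its analogue for $r$ do all the work, and no further estimates are required.
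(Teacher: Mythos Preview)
Your proposal is correct and follows exactly the same approach as the paper: set $u=F_q(x)$, invoke the identities $Q_q'(u)=1/q(x)$, $Q_p'(u)=1/p(T_p(x))$, $Q_r'(u)=1/r(T_r(x))$ from \eqref{proof}, and substitute into \eqref{orth} together with $du=q(x)\,dx$. The only difference is that you spell out the $\alpha=0$ case and the bijectivity remark explicitly, whereas the paper simply states the two identities and leaves the substitution to the reader.
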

\begin{proof}
We let $u=F_q(x)$. From equation \eqref{proof}, we have 
\begin{equation*}
\frac{dQ_p(u)}{du}=\frac{1}{p(T_p(x))}, \quad \frac{dQ_r(u)}{du}=\frac{1}{r(T_r(x))}.  
\end{equation*}
We finish the proof by substituting the above formulas into condition \eqref{orth}. 
\end{proof}
\begin{remark}
{We remark that the proposed divergences are canonical divergences \cite{AA} associated with the transport Hessian metric. In the appendix, we first review the transport Hessian distance, the transport KL divergence, and then provide a proof of the canonical divergence for the transport alpha divergence.}
\end{remark}
{\begin{remark}
We also point out the other derivation of transport alpha divergences. We work on the potential function $\Psi(z)=-\log z$, $z>0$, and derive the alpha divergence of $\Psi(z)$ on $\mathbb{R}_+$. We then substitute the variable $z$ by the quantile density function in the integral of the domain $[0,1]$. This procedure derives the proposed transport alpha divergence, which is a one family generalization of the transport Hessian distance and the transport KL divergence.   
\end{remark}}

\subsection{Transport {alpha geodesic}}
In this section, we construct a one-parameter family of geodesic equations for quantile density functions. We call them transport {alpha geodesic}s. We also present analytical solutions of transport {alpha geodesic}s. 
 
 \begin{definition}[Transport {alpha geodesic} equations]\label{def2}
Given two probability density functions $p$, $q\in\mathcal{P}(\Omega)$ and $\alpha\in\mathbb{R}$, the transport {alpha geodesic} is defined as below. Denote a transport map function $T_\alpha\colon [0, 1]\times \Omega\rightarrow\Omega$. Consider a one-parameter family of partial differential equations: 
\begin{equation}\label{APDE}
\partial_{tt}\partial_xT_\alpha(t,x)-(\alpha+1)\frac{\big(\partial_{t}\partial_xT_\alpha(t,x)\big)^2}{\partial_x T_\alpha(t,x)}=0,
\end{equation}
with boundary conditions $T_\alpha(0,x)=x$ and $T_\alpha(1,\cdot)_\#q=p$. 
Let the curve $r_\alpha(t,\cdot)\in \mathcal{P}(\Omega)$, $t\in[0,1]$, then 
\begin{equation*}
r_\alpha(t,\cdot)=T_{\alpha}(t,\cdot)_\#q, 
\end{equation*} is the solution of transport {alpha geodesic}.
 \end{definition}
 
\begin{proposition}[Transport {alpha geodesic}s]\label{prop6}
Let $T$ be defined in \eqref{sMA}. Assume $T'(x)\neq 0$ for all $x\in \Omega$. 
A solution of transport {alpha geodesic} is given below. The transport map function $T_\alpha$ satisfies 
\begin{equation*}
\partial_xT_\alpha(t,x)=\left\{\begin{aligned}
&\Big((1-t)+t(T'(x))^{-\alpha}\Big)^{-\frac{1}{\alpha}}, &\quad \textrm{if $\alpha\neq 0$;}\\
&(T'(x))^t, &\quad \textrm{if $\alpha=0$.}
\end{aligned}\right.
\end{equation*}  
Equivalently, denote $r_\alpha(t,\cdot)=T_\alpha(t,\cdot)_\#q$, and write $Q_{r_\alpha}(t,\cdot)$, $\partial_uQ_{r_\alpha}(t,u)$ as the quantile function, quantile density function of probability density {function} $r_\alpha(t,\cdot)$, respectively. Then the transport {alpha geodesic}s in QDFs satisfies
{\begin{equation*}
\partial_uQ_{r_\alpha}(t,u)=\left\{\begin{aligned}
&\Big((1-t)Q'_q(u)^{-\alpha}+t Q'_p(u)^{-\alpha}\Big)^{-\frac{1}{\alpha}},&\quad \textrm{if $\alpha\neq 0$;}\\
&Q'_p(u)^{t}Q'_q(u)^{1-t}, 
&\quad\textrm{if $\alpha=0$.} 
\end{aligned}\right.
\end{equation*}}
\end{proposition}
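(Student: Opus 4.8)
The plan is to treat the geodesic PDE \eqref{APDE} as a second-order ODE in $t$ at each fixed $x$ for the unknown $S(t,x):=\partial_x T_\alpha(t,x)$, and to linearize it by the transport analog of the $\alpha$-representation \eqref{alpha}. Writing \eqref{APDE} as
\[
\partial_{tt}S-(\alpha+1)\frac{(\partial_t S)^2}{S}=0,
\]
I would introduce the change of variable $W:=S^{-\alpha}$ when $\alpha\neq 0$ (and $W:=\log S$ when $\alpha=0$). A direct computation gives $\partial_{tt}W=-\alpha S^{-\alpha-1}\big(\partial_{tt}S-(\alpha+1)S^{-1}(\partial_t S)^2\big)$, so that \eqref{APDE} is equivalent to the linear equation $\partial_{tt}W=0$. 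Hence $W$ is affine in $t$, i.e. $W(t,x)=(1-t)W(0,x)+tW(1,x)$, which is the step that collapses all of the nonlinearity.

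Next I would pin down the two boundary data. The condition $T_\alpha(0,x)=x$ forces $S(0,x)=1$, hence $W(0,x)=1$ (resp. $W(0,x)=0$ for $\alpha=0$). For the terminal time, the condition $T_\alpha(1,\cdot)_\# q=p$ together with the monotonicity of $T_\alpha(1,\cdot)$ and the standing assumption $T'\neq 0$ identifies $T_\alpha(1,\cdot)$ with the unique monotone optimal map \eqref{sMA}, so $S(1,x)=T'(x)$ and $W(1,x)=(T'(x))^{-\alpha}$ (resp. $W(1,x)=\log T'(x)$). Substituting into the affine profile gives $W(t,x)=(1-t)+t(T'(x))^{-\alpha}$, and inverting $S=W^{-1/\alpha}$ yields the stated formula for $\partial_x T_\alpha(t,x)$; the case $\alpha=0$ follows the same way from $W(t,x)=t\log T'(x)$.

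Finally, to pass to quantile density functions I would use that pushforward by a monotone map composes with quantiles, namely $Q_{r_\alpha}(t,u)=T_\alpha(t,Q_q(u))$, so by the chain rule $\partial_u Q_{r_\alpha}(t,u)=S(t,Q_q(u))\,Q_q'(u)$. Combining this with the identity $T'(Q_q(u))=Q_p'(u)/Q_q'(u)$ established in the proof of \eqref{ta1}, and absorbing the factor $Q_q'(u)$ into the bracket via $Q_q'(u)=(Q_q'(u)^{-\alpha})^{-1/\alpha}$, gives the quantile-density form (and $(Q_p')^{t}(Q_q')^{1-t}$ when $\alpha=0$). As a sanity check one can verify directly that this profile returns $Q_q'$ at $t=0$ and $Q_p'$ at $t=1$, matching $r_\alpha(0,\cdot)=q$ and $r_\alpha(1,\cdot)=p$.

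The main obstacle I anticipate is the terminal boundary condition rather than the algebra: one must argue that $T_\alpha(1,\cdot)_\# q=p$ determines $\partial_x T_\alpha(1,\cdot)$ uniquely. This is where one-dimensionality and the monotone-rearrangement characterization \eqref{sMA} are essential, since in higher dimensions a pushforward constraint alone does not single out a gradient map without the optimality and monotonicity structure; the assumption $T'(x)\neq 0$ then guarantees invertibility, so that $S^{-\alpha}$ and its affine interpolation remain well defined and strictly positive along the entire geodesic.
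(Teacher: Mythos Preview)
Your proposal is correct and follows essentially the same route as the paper: the paper's proof also observes that \eqref{APDE} is equivalent to $\partial_{tt}\big((\partial_x T_\alpha)^{-\alpha}\big)=0$, reads off the affine-in-$t$ profile from the two endpoint conditions $T_\alpha(0,x)=x$ and $T_\alpha(1,x)=T(x)$, and then changes variables via $u=F_q(x)$ to obtain the quantile-density formulation. Your write-up is in fact slightly more careful than the paper's in two respects---you make the $\alpha=0$ linearization $W=\log S$ explicit, and you flag that the terminal constraint $T_\alpha(1,\cdot)_\# q=p$ only pins down $\partial_x T_\alpha(1,\cdot)$ once monotonicity is imposed---both of which the paper leaves implicit.
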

\begin{proof}
For $\alpha\neq 0$, a simple calculation shows that equation \eqref{APDE} can be reformulated as  
\begin{equation*}
\partial_{tt}\big(\partial_xT_\alpha(t,x)\big)^{-\alpha}=0,
\end{equation*}
with $T_\alpha(0,x)=x$ and $T_\alpha(1,x)=T(x)$. Thus, {the function $\partial_xT_\alpha(t,x)$ is uniquely defined, such that}
\begin{equation*}
\begin{split}
\big(\partial_xT_\alpha(t,x)\big)^{-\alpha}=&t(\partial_xT_\alpha(1,x))^{-\alpha}+(1-t)(\partial_xT_\alpha(0,x))^{-\alpha}\\
=&tT'(x)^{-\alpha}+(1-t).
\end{split}
\end{equation*}
This finishes the first part of the proof. 

{By changing the variable $u=F_q(x)$, we have
\begin{equation*}
\partial_xT_\alpha(t,x)=\partial_xF_{r_\alpha}^{-1}(F_q(x))=Q_{r_\alpha}'(u)\cdot\frac{du}{dx}=\frac{\partial_uQ_{r_\alpha}(t,u)}{Q'_q(u)},
\end{equation*}
for any $t\in [0,1]$. Thus, 
\begin{equation*}
\partial_uQ_{r_\alpha}(t,u)=\Big((1-t)+t\left(\frac{Q'_{p}(u)}{Q'_q(u)}\right)^{-\alpha}\Big)^{-\frac{1}{\alpha}}\cdot Q'_q(u).
\end{equation*}
This finishes the second part of proof. Similar derivations also hold for $\alpha=0$.} 
\end{proof}

Proposition \ref{prop6} can be explained as follows. If $\alpha=-1$, transport $(-1)$ geodesic also satisfies the geodesic equation in Wasserstein-$2$ space, which is ``transportation flat'', meaning that the flatness in the {transport maps}, known as the {McCann's displacement interpolation}:
\begin{equation}\label{tmg}
 \partial_x T_{-1}(t,x)=(1-t)+t\cdot T'(x).
\end{equation}

While, if $\alpha=1$, the transport $1$ geodesic is an ``inverse Jacobi transportation flat'' curve. The transport map function pushforwards the density $q$ to $p$ flatly from the following equation:
\begin{equation}\label{teg}
\partial_xT_{1}(t,x)=\frac{1}{(1-t)+\frac{t}{ T'(x)}}.
\end{equation}
If $\alpha=0$, the transport-$0$ geodesic is a geodesic equation in the transport Hessian metric of negative Boltzmann-Shannon entropy \cite{LiGB2, LiGB3}. 
From now on, we call \eqref{tmg} the {\em$m$--geodesic in Wasserstein-$2$ space}, while name \eqref{teg} the {\em$e$--geodesic in Wasserstein-$2$ space}. 

{
\begin{example}
Consider two Gaussian distributions $p=\mathcal{N}(0, \sigma_p^2)$, $q=\mathcal{N}(0, \sigma_q^2)$, where $\sigma_p$, $\sigma_q>0$ are standard variances of $p$, $q$, respectively. 
Then $T(x)=\sigma_p\sigma_q^{-1}x$. Thus, {$\partial_xT_\alpha(t,x)=\Big((1-t)+t(\sigma_p\sigma_q^{-1})^{-\alpha}\Big)^{-\frac{1}{\alpha}}$. In this case, the transport alpha geodesic $(T_\alpha(t,\cdot))_\#q=r_\alpha(t,\cdot)$ satisfies a time-dependent Gaussian distribution, whose standard variance $\sigma_\alpha$ satisfies $\sigma_\alpha(t)=\sigma_q\cdot \partial_xT_\alpha(t,x)$.}
Hence, let $\alpha=-1, 0, 1$, we have
\begin{equation*}
\partial_xT_{-1}(t,x)=(1-t)+t\sigma_p\sigma_q^{-1}, \quad \partial_xT_0(t,x)=(\sigma_p\sigma_q^{-1})^t,\quad \partial_xT_{1}(t,x)=\frac{1}{(1-t)+t \sigma_p^{-1}\sigma_q}.
\end{equation*}
And 
{
\begin{equation*}
\sigma_{-1}(t)=(1-t)\sigma_q+t\sigma_p, \quad \sigma_0(x)=\sigma_p^t\sigma_q^{1-t},\quad \sigma_1(t)=\frac{1}{(1-t)\sigma_q^{-1}+t \sigma_p^{-1}}.
\end{equation*}}
The following two figures demonstrate the above three transport alpha geodesics. 
 \begin{figure}[H]
    \includegraphics[scale=0.3]{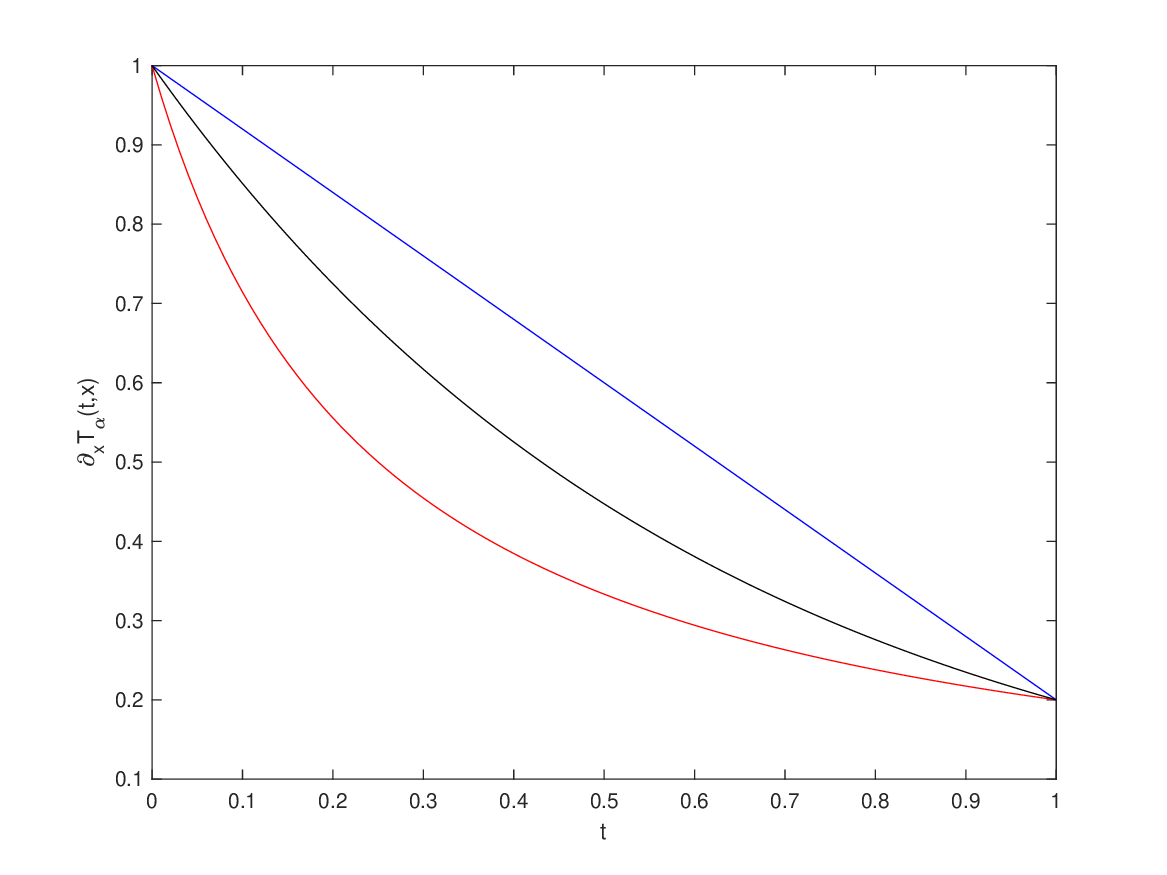} \quad     \includegraphics[scale=0.3]{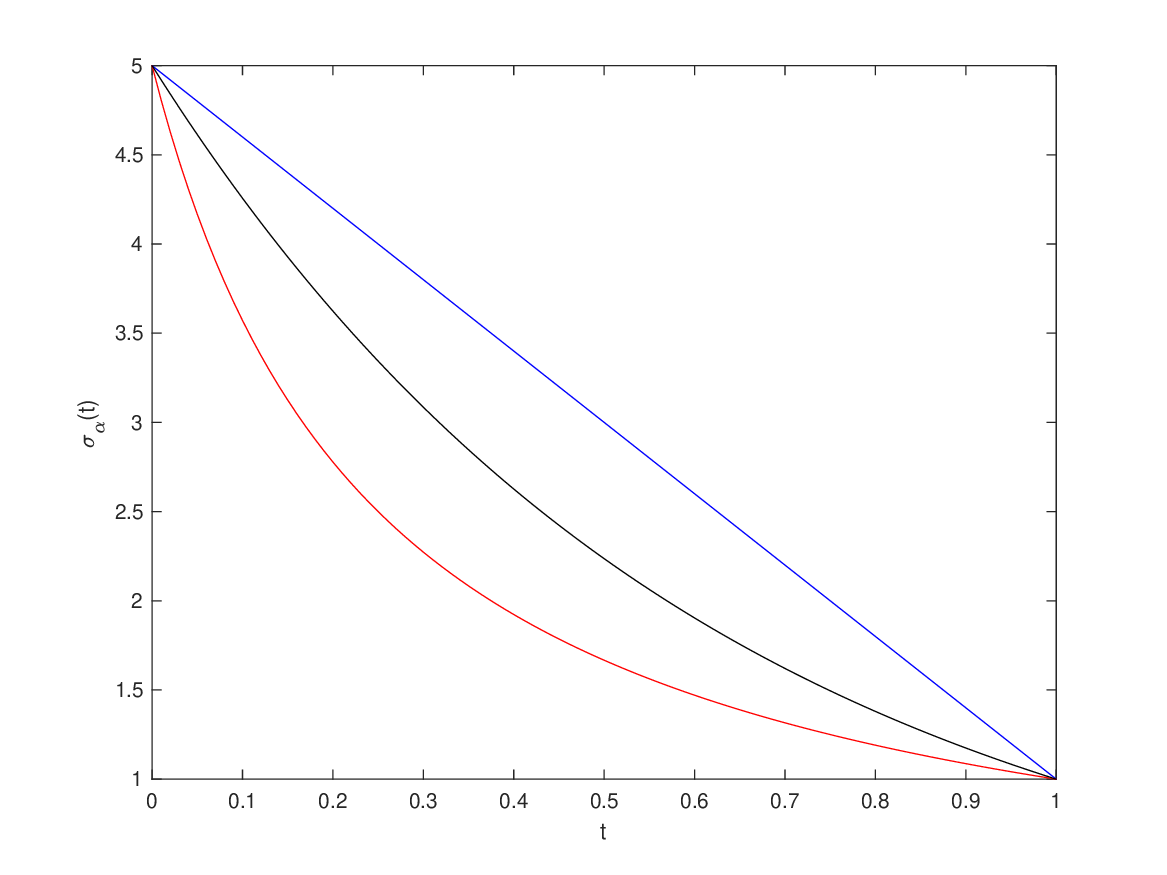}
    \caption{Three curves for $\partial_xT_\alpha(t,x)$ (left), and $\sigma_\alpha(t)$ (right) with $\sigma_p=1$, $\sigma_q=5$. Red: $\alpha=1$. Black: $\alpha=0$. Blue $\alpha=-1$.}
    \label{figure}
\end{figure}
We note that Figure \ref{figure} represents three geodesics in terms of the derivatives of transport map functions and standard variances, which are up to a constant ratio. For standard variances, we show that the transport $(-1)$ geodesic is a line interpolation between $\sigma_p$ and $\sigma_q$, the transport $1$ geodesic is the inverse function of the time variable, while the transport $0$ geodesic is the power function of the time variable.   
\end{example}}
\begin{remark}
{
We informally present the dualistic geometry for transport alpha geodesics. For simplicity of discussion, we study alpha-connections for quantile density functions. Denote a manifold $\mathcal{M}=\Big\{Q'_p\in C^{\infty}[0,1]\colon Q_p'(u)>0,\quad \textrm{for all $u\in [0,1]$}\Big\}$. Denote the tangent vector in the tangent space at $Q'_p\in\mathcal{M}$ as $\mathcal{T}_{Q'_p}\mathcal{M}=\Big\{\dot Q'_p\in C^{\infty}[0,1]\Big\}$. Denote the alpha--connection $\mathbf{\Gamma}^\alpha\colon \mathcal{M}\times C^\infty[0,1]\times C^{\infty}[0,1]\rightarrow C^{\infty}[0,1]$ at the point $Q_p'$ as 
\begin{equation*}
\mathbf{\Gamma}^\alpha:=\mathbf{\Gamma}^\alpha(Q'_p, \dot Q'_p, \dot Q'_p)=-(\alpha+1)\frac{|\dot Q'_p(u)|^2}{Q'_p(u)}.
\end{equation*}
Denote the $0$--connection as $\mathbf{\Gamma}^0$, which is the Levi-Civita connection of the transport Hessian metric:
\begin{equation*}
\mathbf{\Gamma}^0:=\mathbf{\Gamma}^0(Q'_p, \dot Q'_p, \dot Q'_p)=-\frac{|\dot Q'_p(u)|^2}{Q'_p(u)}. 
\end{equation*}
Denote the $1$--connection as $\mathbf{\Gamma}$, which equals to 
\begin{equation*}
\mathbf{\Gamma}:=\mathbf{\Gamma}^{1}(Q'_p, \dot Q'_p, \dot Q'_p)=-2\frac{|\dot Q'_p(u)|^2}{Q'_p(u)}.
\end{equation*}
Denote the $(-1)$--connection as $\mathbf{\Gamma}^*$, such that 
\begin{equation*}
\mathbf{\Gamma}^*:=\mathbf{\Gamma}^{-1}(Q'_p, \dot Q'_p, \dot Q'_p)=0. 
\end{equation*}
The following condition holds: for any $\alpha\in \mathbb{R}$, we have 
$$\mathbf{\Gamma}^{\alpha}=\frac{1+\alpha}{2}\mathbf{\Gamma}+\frac{1-\alpha}{2}\mathbf{\Gamma}^{*}. $$
If $\alpha=0$, then $\mathbf{\Gamma}^0=\frac{1}{2}(\mathbf{\Gamma}+\mathbf{\Gamma}^*)$. If we further denote an operator $\mathbf{C}\colon  \mathcal{M}\times C^\infty[0,1]\times C^{\infty}[0,1]\rightarrow C^{\infty}[0,1]$, such that
$$\mathbf{C}:=\mathbf{C}(Q'_p, \dot Q'_p, \dot Q'_p)=2\cdot \frac{|\dot Q'_p(u)|^2}{Q'_p(u)},$$ then 
\begin{equation*}
\mathbf{\Gamma}^\alpha=\mathbf{\Gamma}^0-\frac{\alpha}{2}\mathbf{C}, \quad \mathbf{\Gamma}^{-\alpha}=\mathbf{\Gamma}^0+\frac{\alpha}{2}\mathbf{C}. 
\end{equation*}
}
\end{remark}
\begin{remark}
{ We remark a relation between the geodesic and the orthogonal condition in Corollary \ref{col:toc}. For $\alpha\neq 0$, note that for $\gamma_\alpha$ connecting $p$, $q$ in Definition \ref{def2}, we have 
\begin{equation*}
\partial_t \Big(\partial_uQ_{r_\alpha}(t,u)\Big)^{-\alpha}=Q'_p(u)^{-\alpha}-Q'_q(u)^{-\alpha}. 
\end{equation*}
Suppose that $l_{-\alpha, p,q}(t,\cdot)$, $t\in[0,1]$ is the transport negative alpha geodesic connecting $p$, $q$, such that $l_{-\alpha,p,q}(0,\cdot)=p(\cdot)$, $l_{-\alpha, p,q}(1,\cdot)=q(\cdot)$.  And $l_{\alpha, r, q}(t,\cdot)$, $t\in [0,1]$, is the transport alpha geodesic connecting $r$, and $q$, such that $l_{\alpha, r, q}(0,\cdot)=r(\cdot)$, $l_{\alpha, r, q}(1)=q(\cdot)$. Then the orthogonal condition in Corollary \ref{col:toc} can be written as 
\begin{equation*}
\int_0^1\partial_t \Big(\partial_uQ_{l_{-\alpha, p,q}}(t,u)\Big)^{\alpha}\cdot \partial_t \Big(\partial_uQ_{l_{\alpha, r,q}}(t,u)\Big)^{-\alpha} du=0.
\end{equation*}
Similar results also hold for $\alpha=0$. 
}
\end{remark}

\section{Hessian structures of entropy in Wasserstein-$2$ space}\label{sec3}
In this section, we formulate the Hessian structures in Wasserstein-$2$ space on one-dimensional sample space.  
In particular, we derive the 3-symmetric tensor from the third order derivatives of negative Boltzmann--Shannon entropy in Wasserstein-$2$ space. \subsection{Review}
We briefly recall some facts about the Wasserstein-$2$ metric \cite{Villani2009_optimal} and the Wasserstein-$2$ Hessian metric \cite{LiGB3}. Denote the smooth, strictly positive probability density space by
\begin{equation*}
\mathcal{P}_o(\Omega)=\Big\{p\in C^{\infty}(\Omega)\colon \int_\Omega p(x)dx=1,~p(x)>0\Big\}.
\end{equation*}
Denote the tangent space at $p\in\mathcal{P}_o(\Omega)$ by
\begin{equation*}
T_p\mathcal{P}_o(\Omega)=\Big\{\sigma\in C^{\infty}(\Omega)\colon \int_\Omega\sigma(x)=0\Big\}.
\end{equation*}
Write the cotangent space at $p\in \mathcal{P}_o(\Omega)$ by 
\begin{equation*}
T_p^*\mathcal{P}_o(\Omega)= C^\infty(\Omega)/\mathbb{R}.
\end{equation*}
For any constant $c\in\mathbb{R}$, if $\Phi\in T_p^*\mathcal{P}_o(\Omega)$, then $\Phi(x)+c\in T_p^*\mathcal{P}_o(\Omega)$. Define an inner product $g_{\mathrm{W}}\colon \mathcal{P}_o(\Omega)\times  T_p\mathcal{P}_o(\Omega)\times  T_p\mathcal{P}_o(\Omega)\rightarrow\mathbb{R}$ by 
\begin{equation*}
g_{\mathrm{W}}(p)(\sigma_1,\sigma_2)=\int_\Omega \Phi'_1(x) \cdot \Phi'_2(x)p(x)dx,
\end{equation*}
where $\sigma_i(x)=-\partial_x\big(p(x)\Phi'_i(x)\big)$, with $\sigma_i\in  T_p\mathcal{P}_o(\Omega)$ and $\Phi_i\in T^*_p\mathcal{P}_o(\Omega)$, for $i=1,2$. Thus, $(\mathcal{P}(\Omega), g_{\mathrm{W}})$ satisfies an infinite-dimensional Riemannian manifold in probability density space. In literature, $(\mathcal{P}(\Omega), g_{\mathrm{W}})$ is often called density manifold \cite{Lafferty} or Wasserstein-$2$ space \cite{otto2001}. 

The Hessian metric in density manifold $(\mathcal{P}(\Omega), g_{\mathrm{W}})$ is defined as follows. Denote the Boltzmann-Shannon entropy by 
\begin{equation*}
\mathcal{H}(p)=-\int_\Omega p(x)\log p(x)dx.
\end{equation*}
Denote the Hessian operator of negative $\mathcal{H}(p)$ by a two form in $(\mathcal{P}(\Omega), g_{\mathrm{W}})$. In other words, let  $g_{\mathrm{H}}=-\textrm{Hess}_{\mathrm{W}}\mathcal{H}\colon \mathcal{P}_o(\Omega)\times  T_p\mathcal{P}_o(\Omega)\times  T_p\mathcal{P}_o(\Omega)\rightarrow\mathbb{R}$, then 
\begin{equation}\label{THM}
g_{\mathrm{H}}(p)(\sigma_1,\sigma_2):=-\textrm{Hess}_{\mathrm{W}}\mathcal{H}(p)(\sigma_1,\sigma_2):=\int_\Omega \Phi''_1(x)\cdot\Phi''_2(x)p(x)dx,
\end{equation}
where $\sigma_i(x)=-\partial_x\big(p(x)\Phi'_i(x)\big)$, with $\sigma_i\in  T_p\mathcal{P}_o(\Omega)$ and $\Phi_i\in T_p^*\mathcal{P}_o(\Omega)$, for $i=1,2$. 
\subsection{Transport 3--symmetric tensor}
We are now ready to formulate the third derivative of negative entropy $-\mathcal{H}(p)$ in Wasserstein-$2$ space. It is a three form, or 3--symmetric tensor in $(\mathcal{P}(\Omega), g_{\mathrm{W}})$.  

\begin{definition}[Transport 3--symmetric tensor]\label{LT}
Denote $T_{\mathrm{H}}\colon \mathcal{P}_o(\Omega)\times  T_p\mathcal{P}_o(\Omega)\times  T_p\mathcal{P}_o(\Omega)\times  T_p\mathcal{P}_o(\Omega)\rightarrow\mathbb{R}$. Then
\begin{equation*}
T_{\mathrm{H}}(p)(\sigma_1, \sigma_2,\sigma_3)=2\int_\Omega \Phi''_1(x)\cdot\Phi''_2(x)\cdot\Phi''_3(x) p(x)dx,
\end{equation*}
where {$\sigma_i(x)=-\big(p(x)\Phi'_i(x)\big)'$}, with $\sigma \in  T_p\mathcal{P}_o(\Omega)$, and $\Phi_i\in T^*_p\mathcal{P}_o(\Omega)$, for $i=1,2, 3$.
\end{definition}
We also present that the transport 3-symmetric tensor introduces a third-order iterative Bakry--{\'E}mery Gamma calculus; {see related studies between Gamma calculuses \cite{BE} and optimal transport \cite{LV, RS}}.  
\begin{theorem}[Gamma calculus induced 3--symmetric tensor]
Denote bilinear forms $\Gamma_1$, $\Gamma_2\colon C^{\infty}(\Omega)\times C^{\infty}(\Omega)\rightarrow C^\infty(\Omega)$ by 
\begin{equation*}
\Gamma_1(\Phi, \Phi)(x)=\Phi'(x)\cdot \Phi'(x),\quad \Gamma_2(\Phi, \Phi)(x)=\Phi''(x)\cdot \Phi''(x).
\end{equation*}
Define the Gamma-3 operator $\Gamma_3\colon C^{\infty}(\Omega)\times C^{\infty}(\Omega)\times C^{\infty}(\Omega)\rightarrow C^\infty(\Omega)$ by 
\begin{equation*}
\Gamma_3(\Phi,\Phi, \Phi)(x):=\Gamma_2(\Gamma_1(\Phi, \Phi), \Phi)(x)-\Gamma_1(\Gamma_2(\Phi,\Phi), \Phi)(x).
\end{equation*}
Then the following equation holds:
 \begin{equation*}
T_{\mathrm{H}}(p)(\sigma, \sigma, \sigma)=\int_\Omega \Gamma_3(\Phi(x), \Phi(x), \Phi(x))p(x)dx,
\end{equation*}
where $\sigma=-\partial_x(p(x)\Phi'(x))$.
\end{theorem}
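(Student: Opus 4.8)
The plan is to observe that the claimed identity is, after stripping off the integral against $p$, a purely pointwise algebraic statement about the differential operator $\Gamma_3$. Comparing with Definition \ref{LT}, which gives $T_{\mathrm{H}}(p)(\sigma,\sigma,\sigma)=2\int_\Omega (\Phi''(x))^3 p(x)dx$, it suffices to prove the pointwise identity $\Gamma_3(\Phi,\Phi,\Phi)(x)=2\,(\Phi''(x))^3$; integrating this against $p(x)dx$ then yields the theorem directly, with no integration by parts required.

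To establish the pointwise identity I would first record the two intermediate quantities. Writing $A:=\Gamma_1(\Phi,\Phi)=(\Phi')^2$ and $B:=\Gamma_2(\Phi,\Phi)=(\Phi'')^2$, and reading $\Gamma_1(f,g)=f'g'$, $\Gamma_2(f,g)=f''g''$ as the (polarized) bilinear forms, the two terms defining $\Gamma_3$ become $\Gamma_2(A,\Phi)=A''\,\Phi''$ and $\Gamma_1(B,\Phi)=B'\,\Phi'$. Differentiating gives $A''=((\Phi')^2)''=2(\Phi'')^2+2\Phi'\Phi'''$ and $B'=((\Phi'')^2)'=2\Phi''\Phi'''$.

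Substituting these expressions, I would then compute
\begin{equation*}
\Gamma_3(\Phi,\Phi,\Phi)=\big(2(\Phi'')^2+2\Phi'\Phi'''\big)\Phi''-\big(2\Phi''\Phi'''\big)\Phi'=2(\Phi'')^3+2\Phi'\Phi''\Phi'''-2\Phi'\Phi''\Phi'''.
\end{equation*}
The two mixed cubic terms cancel identically, leaving $\Gamma_3(\Phi,\Phi,\Phi)=2(\Phi'')^3$, which completes the argument after integration against $p$.

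I do not anticipate a genuine obstacle here: the computation is elementary and the only point demanding care is the correct reading of $\Gamma_1,\Gamma_2$ as bilinear forms, so that when the function $\Gamma_1(\Phi,\Phi)$ is inserted into the first slot of $\Gamma_2$ it is differentiated twice rather than being treated as $\Phi$. The one substantive feature worth flagging is the exact cancellation of the $\Phi'\Phi''\Phi'''$ terms; this is precisely what makes the iterated difference $\Gamma_2(\Gamma_1(\cdot),\cdot)-\Gamma_1(\Gamma_2(\cdot),\cdot)$ collapse to the clean cube $2(\Phi'')^3$, and hence what links the Gamma-$3$ operator to the transport $3$-symmetric tensor of Definition \ref{LT}.
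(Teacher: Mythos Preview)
Your proposal is correct and follows essentially the same route as the paper: both compute $\Gamma_2(\Gamma_1(\Phi,\Phi),\Phi)=\partial_x^2(|\Phi'|^2)\Phi''=2(\Phi'')^3+2\Phi'\Phi''\Phi'''$ and $\Gamma_1(\Gamma_2(\Phi,\Phi),\Phi)=\partial_x(|\Phi''|^2)\Phi'=2\Phi'\Phi''\Phi'''$ directly, then subtract to obtain $2(\Phi'')^3$ pointwise. Your introduction of the auxiliary names $A,B$ is the only cosmetic difference.
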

\begin{proof}The proof follows by a direct calculation. Note that 
\begin{equation*}
\Gamma_1(\Gamma_2(\Phi,\Phi), \Phi)=\partial_x(|\Phi''|^2)\Phi'=2\Phi'''\cdot\Phi''\cdot\Phi',
\end{equation*}
and 
\begin{equation*}
\Gamma_2(\Gamma_1(\Phi,\Phi), \Phi)=\partial_x^2(|\Phi'|^2)\Phi''=2\Phi'''\cdot\Phi''\cdot\Phi'+2|\Phi''|^3.
\end{equation*}
By taking the difference between the two functionals, we derive the result. 
\end{proof}

We finish this section by representing Taylor expansions of transport {alpha divergence}s, using the Hessian structure $(\mathcal{P}_o(\Omega), g_{\mathrm{H}}, T_{\mathrm{H}})$. 
\begin{corollary}[Taylor expansions in transport Hessian structures]
For any $p$, $q\in\mathcal{P}_o(\Omega)$. Denote $\Phi\in T^*_q\mathcal{P}_o(\Omega)$, such that
\begin{equation*}
\Phi(x)=\int_0^x Q_p(F_q(y))dy-\frac{x^2}{2}+c, 
\end{equation*}
where $c\in\mathbb{R}$ is a constant. Denote $\sigma=\partial_x(q(x)\Phi'(x))\in T_q\mathcal{P}_o(\Omega)$. Then, the following equation holds. 
\begin{equation*}
\begin{split}
\mathrm{D}_{\rt,\alpha}(p\|q)=&\frac{1}{2}g_{\mathrm{H}}(q)(\sigma, \sigma)+\frac{\alpha-3}{6}T_{\mathrm{H}}(q)(\sigma, \sigma, \sigma)+{O\left(\int_\Omega |\Phi''(x)|^4 q(x)dx\right)}. 
\end{split}
\end{equation*}
\end{corollary}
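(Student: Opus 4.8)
The plan is to reduce the abstract Hessian-geometric expansion to the concrete quantile expansion of Proposition~\ref{prop3}, by identifying the second derivative of the cotangent potential $\Phi$ with the relative quantile-density deviation
\[
h(u):=\frac{Q'_p(u)-Q'_q(u)}{Q'_q(u)},\qquad u=F_q(x).
\]
First I would differentiate the given potential directly. Since $\Phi(x)=\int_0^x Q_p(F_q(y))\,dy-\tfrac{x^2}{2}+c$, the fundamental theorem of calculus gives $\Phi'(x)=Q_p(F_q(x))-x=T(x)-x$, with $T$ the optimal map \eqref{sMA}, and hence $\Phi''(x)=T'(x)-1$. Invoking the chain-rule identity already used in the proof of the QDF representation \eqref{ta1}, namely $T'(x)=Q'_p(u)/Q'_q(u)$ where $u=F_q(x)$ and $Q'_q(u)=1/q(x)$, this produces the key identity $\Phi''(x)=h(u)$.

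Next I would confirm that $\sigma=\partial_x(q(x)\Phi'(x))$ is precisely the tangent vector whose cotangent representative is $\Phi$, so that the definitions of $g_{\mathrm{H}}$ and of the transport $3$-symmetric tensor $T_{\mathrm{H}}$ (Definition~\ref{LT}) may be evaluated with this particular $\Phi$; this is immediate from the defining correspondence $\sigma=-\partial_x(q\Phi')$ between tangent vectors and their potentials, the residual sign and the constant $c\in\mathbb{R}$ being harmless since $\Phi$ lives in $T^*_q\mathcal{P}_o(\Omega)=C^\infty(\Omega)/\mathbb{R}$ and both forms are homogeneous in $\Phi''$. With $\Phi''=h$ in hand, I would then push every integral over $\Omega$ forward to the unit interval via the change of variables $u=F_q(x)$, $du=q(x)\,dx$. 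This converts the Hessian metric into $g_{\mathrm{H}}(q)(\sigma,\sigma)=\int_\Omega|\Phi''(x)|^2q(x)\,dx=\int_0^1|h(u)|^2\,du$, the $3$-tensor into $T_{\mathrm{H}}(q)(\sigma,\sigma,\sigma)=2\int_\Omega\Phi''(x)^3q(x)\,dx=2\int_0^1 h(u)^3\,du$, and the remainder into $\int_\Omega O(|\Phi''(x)|^4)q(x)\,dx=\int_0^1 O(|h(u)|^4)\,du$.

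Finally I would substitute these three identities into the quantile Taylor expansion of Proposition~\ref{prop3}, replacing the quadratic quantile integral by $g_{\mathrm{H}}$, the cubic one by $T_{\mathrm{H}}$, and the quartic remainder by its transported counterpart, so as to read off the asserted formula in terms of $(g_{\mathrm{H}},T_{\mathrm{H}})$.

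The step I expect to be the main obstacle is the exact bookkeeping of the numerical constants in the cubic term, specifically the interplay between the factor $2$ in the definition of $T_{\mathrm{H}}$ (equivalently the $2\Phi''(x)^3$ emerging from the $\Gamma_3$ computation) and the Taylor coefficient $\tfrac16$ attached to $f_{\mathrm{T},\alpha}'''(1)=\alpha-3$. Matching $\tfrac{\alpha-3}{6}\int_0^1 h^3\,du$ from Proposition~\ref{prop3} against a multiple of $T_{\mathrm{H}}(q)(\sigma,\sigma,\sigma)$ requires tracking this normalization precisely; the identification $\Phi''=h$, the pushforward to $[0,1]$, and the quadratic and quartic terms are otherwise routine.
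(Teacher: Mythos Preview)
Your proposal is correct and follows essentially the same route as the paper: compute $\Phi'(x)=Q_p(F_q(x))-x$ and $\Phi''(x)=Q'_p(u)/Q'_q(u)-1=h(u)$ with $u=F_q(x)$, push the integrals $\int_\Omega(\Phi'')^k q\,dx$ to $\int_0^1 h^k\,du$ by the change of variable $du=q(x)\,dx$, and read off the result from Proposition~\ref{prop3}. Your flagged concern about the factor $2$ in the definition of $T_{\mathrm{H}}$ versus the coefficient $\tfrac{\alpha-3}{6}$ is legitimate and is not explicitly addressed in the paper's brief proof either.
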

\begin{proof}
The proof is based on a direct calculation. Note that 
\begin{equation*}
\Phi'(x)=Q_p(F_q(x))-x,
\end{equation*}
and 
\begin{equation*}
\Phi''(x)=\frac{d}{dx}Q_p(F_q(x))-1=\frac{Q'_p(F_q(x))}{\frac{1}{q(x)}}-1. 
\end{equation*}
For $k=2,3$, from the change of variable $u=F_q(x)$, we have
\begin{equation*}
\int_\Omega (\Phi''(x))^kp(x)dx=\int_0^1 \left(\frac{Q'_p(u)}{Q'_q(u)}-1\right)^k du. 
\end{equation*}
From Proposition \ref{prop3}, we finish the proof. 
\end{proof}
\begin{remark}
We note that $\Gamma_1$, $\Gamma_2$ are often called Gamma one and Gamma two operators, which are firstly introduced by Bakry--{\'E}mery \cite{BE} to study the Ricci curvature lower bound on a sample space. Here we only show them in one-dimensional sample space. The iterative Gamma two calculus connects with second-order derivatives of entropy in Wasserstein-$2$ space \cite{BE, Villani2009_optimal} with generalizations \cite{LiG}. Here, we present a ``third-order'' Gamma calculus to formulate the third derivatives in Wasserstein-$2$ space, namely {\em transport 3-symmetric tensor}. We will study geometric calculations of transport-3 symmetric tensors in high-dimensional spaces in future works. Following \cite{LiG, LiG1}, we expect that the information geometry method and Gamma three operators are tools in studying generalized divergences in high dimensional Wasserstein-$2$ spaces.  
\end{remark}
\section{Examples}\label{sec4}
This section provides examples of transport {alpha divergence}s between one-dimensional probability distributions, including generative models, location-scale families, and Cauchy distributions. 

 In machine learning applications \cite{ACB}, a generative model is defined as follows. Consider a latent random variable $Z\sim p_{\mathrm{ref}}$, where $p_{\mathrm{ref}}\in\mathcal{P}(\Omega)$ is a given reference measure. Denote a map function $G\colon \Omega\times \Theta\rightarrow\Omega$, where $\Theta\subset\mathbb{R}^n$ is a parameter space. { We also assume that $G(\cdot, \theta)$ is a monotone mapping for all $\theta\in \Theta$.} Then 
\begin{equation*}
 G(\cdot, \theta)_\#p_{\textrm{ref}}(\cdot)=p(\cdot,\theta).
\end{equation*}
If $G$ is linear w.r.t. $Z$, the generative family forms a location-scale family. Furthermore, if $G$ is linear and $Z$ follows a Gaussian distribution, the generative model formulates a class of Gaussian distributions. 
\begin{proposition}[Transport {alpha divergence} in one dimensional distributions]
Let $\theta_X$, $\theta_Y\in\Theta$ and consider $Z\sim p_{\mathrm{ref}}$, with  
\begin{equation*}
X=G(Z, \theta_X)\sim p_X,\quad Y=G(Z, \theta_Y)\sim p_Y. 
\end{equation*}
Then the transport {alpha divergence} between probability distributions $p_X$, $p_Y$ satisfies 
\begin{equation*}
\mathrm{D}_{\rt, \alpha}(p_X\|p_Y)=\left\{\begin{aligned}
&\frac{1}{\alpha^2}\mathbb{E}_{Z\sim p_{\mathrm{ref}}}\Big[\Big(\frac{\partial_{Z}G(Z, \theta_X)}{\partial_{Z}G(Z, \theta_Y)}\Big)^\alpha-\alpha\log\frac{\partial_{Z}G(Z, \theta_X)}{\partial_{Z}G(Z, \theta_Y)}-1\Big],&\quad \textrm{if $\alpha\neq 0$;}\\
&\frac{1}{2}\mathbb{E}_{Z\sim p_{\mathrm{ref}}}\Big[\Big(\log\frac{\partial_{Z}G(Z, \theta_X)}{\partial_{Z}G(Z, \theta_Y)}\Big)^2\Big],&\quad\textrm{if $\alpha=0$.}
\end{aligned}\right.
\end{equation*}
Here $\mathbb{E}$ is the expectation operator. We also compare transport {alpha divergence}s with the Wasserstein-$2$ distance 
\begin{equation*}
W_2(p, q)=\sqrt{\mathbb{E}_{Z\sim p_{\mathrm{ref}}}\Big[\big|G(Z, \theta_X)-G(Z, \theta_Y)\big|^2\Big]}, 
\end{equation*}
where we need to assume that $\mathbb{E}_{Z\sim p_{\mathrm{ref}}} |G(Z, \theta)|^2<+\infty$, for $\theta=\theta_X$ or $\theta_Y$. {We note that if $\alpha\neq 0$, the transport alpha divergence is not symmetric on parameters $\theta_X$, $\theta_Y$. This is in contrast with the Wasserstein-2 distance.}
\end{proposition}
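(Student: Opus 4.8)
The plan is to reduce the statement to the quantile-function representation \eqref{ta1}, combined with the single structural observation that, in one dimension, pushing a fixed reference density forward by a monotone generator amounts to composing quantile functions. First I would record the identity linking the quantile functions of $p_X$, $p_Y$ to the generator. Assuming $G(\cdot,\theta)$ is monotone increasing in its first argument, so that it coincides with the monotone optimal rearrangement underlying the definition of $\mathrm{D}_{\rt,\alpha}$, the pushforward relation $G(\cdot,\theta_X)_\#p_{\mathrm{ref}}=p_X$ reads, at the level of cumulative distribution functions, $F_{p_X}(G(z,\theta_X))=F_{p_{\mathrm{ref}}}(z)$. Inverting yields $Q_{p_X}(u)=G(Q_{p_{\mathrm{ref}}}(u),\theta_X)$, and likewise $Q_{p_Y}(u)=G(Q_{p_{\mathrm{ref}}}(u),\theta_Y)$.

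Next I would differentiate these identities in $u$ by the chain rule, obtaining $Q'_{p_X}(u)=\partial_Z G(Q_{p_{\mathrm{ref}}}(u),\theta_X)\cdot Q'_{p_{\mathrm{ref}}}(u)$ and its analogue for $Y$. The common factor $Q'_{p_{\mathrm{ref}}}(u)$ cancels in the ratio, so that
\[
\frac{Q'_{p_X}(u)}{Q'_{p_Y}(u)}=\frac{\partial_Z G(Q_{p_{\mathrm{ref}}}(u),\theta_X)}{\partial_Z G(Q_{p_{\mathrm{ref}}}(u),\theta_Y)}.
\]
Substituting this into \eqref{ta1} gives $\mathrm{D}_{\rt,\alpha}(p_X\|p_Y)=\int_0^1 f_{\mathrm{T},\alpha}\big(\partial_Z G(Q_{p_{\mathrm{ref}}}(u),\theta_X)/\partial_Z G(Q_{p_{\mathrm{ref}}}(u),\theta_Y)\big)\,du$, reducing everything to a single integral over $[0,1]$.

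Then I would change variables $z=Q_{p_{\mathrm{ref}}}(u)$, i.e. $u=F_{p_{\mathrm{ref}}}(z)$ with $du=p_{\mathrm{ref}}(z)\,dz$, which turns the integral over $[0,1]$ into an expectation against $p_{\mathrm{ref}}$, namely $\mathbb{E}_{Z\sim p_{\mathrm{ref}}}\big[f_{\mathrm{T},\alpha}(\partial_Z G(Z,\theta_X)/\partial_Z G(Z,\theta_Y))\big]$. Inserting the explicit two branches of $f_{\mathrm{T},\alpha}$ from its definition then produces the cases $\alpha\neq 0$ and $\alpha=0$ exactly as stated. For the Wasserstein-$2$ comparison I would invoke the one-dimensional identity $W_2(p_X,p_Y)^2=\int_0^1|Q_{p_X}(u)-Q_{p_Y}(u)|^2\,du$ from the review subsection, substitute $Q_{p_X}(u)=G(Q_{p_{\mathrm{ref}}}(u),\theta_X)$ together with its $Y$-analogue, and apply the same change of variables to reach $\mathbb{E}_{Z\sim p_{\mathrm{ref}}}[|G(Z,\theta_X)-G(Z,\theta_Y)|^2]$; the finite-second-moment hypothesis guarantees finiteness.

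The main obstacle is the justification of the composition identity $Q_{p_X}(u)=G(Q_{p_{\mathrm{ref}}}(u),\theta_X)$, which genuinely requires $G(\cdot,\theta)$ to be monotone increasing so that it is the optimal monotone rearrangement. Without monotonicity the generator need not be the transport map, $\partial_Z G$ could change sign, and the clean quotient of quantile densities would be replaced by a sum over preimages. Once monotonicity is in place, I expect the remaining steps to be routine chain-rule differentiations and a single change of variables.
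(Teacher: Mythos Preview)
Your argument is correct and complete. The paper itself states this proposition without proof, proceeding directly to the location--scale and Cauchy examples, so there is no reference proof to compare against; but your route through the quantile representation \eqref{ta1}, the composition identity $Q_{p_X}(u)=G(Q_{p_{\mathrm{ref}}}(u),\theta_X)$, and the change of variable $u=F_{p_{\mathrm{ref}}}(z)$ is exactly the kind of computation the paper performs elsewhere (e.g.\ in deriving \eqref{ta1} from the original definition), and it is the natural one here. Your explicit flagging of the monotonicity hypothesis on $G(\cdot,\theta)$ is appropriate: the paper assumes $G$ is invertible but leaves monotonicity implicit, and without it neither the quantile composition nor the identification of $G(\cdot,\theta_X)\circ G(\cdot,\theta_Y)^{-1}$ with the optimal transport map would hold.
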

{\begin{remark} 
One {may} apply the transport alpha divergence to measure the closeness between two neural network parameters. The other application is to {apply transport alpha divergences as loss functions for inference} problems. Some related studies have been conducted for the quantile density function in statistics \cite{Par}. 
However, in general high-dimensional sample spaces, the construction of transport alpha divergences, with the selection of parameter alpha in inference problems, are unclear, which are left for the future work. 
\end{remark}
\begin{remark}
{The proposed transport alpha divergence is not intended as a replacement formula for the Wasserstein-$2$ distance. Instead, it modifies the geometric structure of the density manifold. While the local expansion of the Wasserstein-$2$ distance yields only the Riemannian metric (the Wasserstein information matrix), the Taylor expansion of the transport alpha divergence contains an additional third-order operator, which beyonds the transport Hessian metric. At present, the optimization properties of the divergence as a loss function are not fully understood. The systematic analysis with the comparison with the Wasserstein-$2$ distance loss function, will be investigated in future work.}
\end{remark}
\begin{example}[Location scale family]
Suppose $G$ is a linear transport map function such that 
\begin{equation*}
G(Z, \theta)=\theta Z,
\end{equation*}
with $\theta>0$ and $Z\in\mathbb{R}^1$. 
Then $p(\cdot,\theta)=G(\cdot, \theta)\# p_{\textrm{ref}}$ is a location scale family. In this case, we have 
\begin{equation*}
\mathrm{D}_{\rt, \alpha}(p_X\|p_Y)=\left\{\begin{aligned}
&\frac{1}{\alpha^2}\Big[\Big(\frac{\theta_X}{\theta_Y}\Big)^\alpha-\alpha\log\frac{\theta_X}{\theta_Y}-1\Big],&\quad \textrm{if $\alpha\neq 0$;}\\
&\frac{1}{2}\Big(\log\frac{\theta_X}{\theta_Y}\Big)^2,&\quad\textrm{if $\alpha=0$.}
\end{aligned}\right.
\end{equation*}
 \end{example}

We last present an example of the Wasserstein-$2$ distance not being well defined, meaning that the distributions are not with the finite second moment. In this case, the transport {alpha divergence} is still well defined.  
\begin{example}[Cauchy distributions]\label{ex3}
The Cauchy distribution is defined as follows. For $\gamma>0$, 
\begin{equation*}
p(x, \gamma)=\frac{1}{\pi \gamma}\Big[\frac{1}{(\frac{x}{\gamma})^2+1}\Big]. 
\end{equation*}
Thus, denote $T(x)=\gamma\cdot x$, we have $T_\#p(\cdot, 1)=p(\cdot, \gamma)$. For $\gamma_1$, $\gamma_2>0$, we have 
\begin{equation*}
\mathrm{D}_{\rt, \alpha}(p(\cdot, \gamma_1)\| p(\cdot, \gamma_2))=\left\{\begin{aligned}
&\frac{1}{\alpha^2}\Big[\Big(\frac{\gamma_1}{\gamma_2}\Big)^\alpha-\alpha\log\frac{\gamma_1}{\gamma_2}-1\Big],&\quad \textrm{if $\alpha\neq 0$;}\\
&\frac{1}{2}\Big(\log\frac{\gamma_1}{\gamma_2}\Big)^2, &\quad \textrm{if $\alpha=0$.}
\end{aligned}\right.
\end{equation*}
{We remark that the Cauchy distribution does not have the finite 
second moment.} Thus, the Wasserstein-$2$ distance is not well defined, i.e., {$\mathrm{W}_2(p(\cdot, \gamma_1), p(\cdot, \gamma_2))=+\infty$}. 
\end{example}

\section{Discussion}
This paper proposes {a class of transport alpha divergence}s, one-parameter variation of {the} transport KL divergence and {the} transport Hessian distance. They are connected with Hessian metrics and $3$--symmetric tensors of the negative Boltzmann-Shannon entropy in Wasserstein-$2$ space. We provide several analytical examples in one-dimensional probability densities, including generative models and {Cauchy distributions}. 

It is worth mentioning that the quantile density functions (QDFs) have been applied in statistical learning problems \cite{Par}. The quantile density functions measure densities' shape up to any constant shifting. The transport {alpha divergence} provides a class of functionals for measuring the difference from QDFs, i.e., {derivatives of transport map functions}. In future work, we shall study transport alpha divergences in high dimensional probability densities \cite{CA,LiGB3}. This direction includes analysis, dualities, invariance properties, and optimization algorithms of transport mapping-related divergence functionals. In particular, systematic geometric calculations for Hessian structures in Wasserstein-$2$ space $(\mathcal{P}_o(\Omega), g_{\mathrm{H}}, T_{\mathrm{H}})$ will be investigated; see related studies in {\cite{CY, LiGB3, LiGB4, SY}}. {The {systematic} understanding of three symmetric tensors and alpha connections rely on both studies in optimal transport and information geometry; see related studies in \cite{NA, AA, RW}. 

\noindent\textbf{Acknowledgements}. {W. Li's work is supported by AFOSR YIP award No. FA9550-23-1-0087, NSF RTG: 2038080, NSF DMS: 2245097 and the McCausland Faculty Fellow in University of South Carolina.}

\newpage
\section*{Appendix}
{In this section, we provide some necessary proofs, which are used in this paper. }
\subsection{Derivatives in Wasserstein-$2$ space}
We first present first, second, and third-order derivatives in Wasserstein-$2$ space. This provides the derivation for transport $3$--symmetric tensor defined in Definition \ref{LT}.
\begin{proposition}
Denote $p\colon [0,1]\times\Omega\rightarrow\mathbb{R}$ satisfying the geodesics equation in $(\mathcal{P}_o(\Omega), g_{\mathrm{W}})$ with $p(0,x)=p(x)$, $\partial_tp(0,x)=\sigma(x)=-\partial_x(p(x)\Phi'(x))$. Then 
\begin{equation*}
-\frac{d^n}{dt^n}\mathcal{H}(p(t,\cdot))=(-1)^n (n-1)!\int_\Omega (\Phi''(x))^n p(x)dx.
\end{equation*}
In particular, for $n=1,2,3$, we have 
\begin{itemize}
\item[(i)] 
\begin{equation*}
\frac{d}{dt}\mathcal{H}(p(t,\cdot))|_{t=0}=\mathrm{grad}_{\mathrm{W}}\mathcal{H}(p)(\sigma)=\int_\Omega \Phi''(x)p(x)dx.
\end{equation*}
\item[(ii)]
\begin{equation*}
\frac{d^2}{dt^2}\mathcal{H}(p(t,\cdot))|_{t=0}=\mathrm{Hess}_{\mathrm{W}}\mathcal{H}(p)(\sigma,\sigma)=\int_\Omega (\Phi''(x))^2p(x)dx.
\end{equation*}
\item[(iii)]
\begin{equation*}
\frac{d^3}{dt^3}\mathcal{H}(p(t,\cdot))|_{t=0}=T_{\mathrm{H}}(p)(\sigma, \sigma, \sigma)=2\int_\Omega (\Phi''(x))^3 p(x)dx.
\end{equation*}
\end{itemize}
\end{proposition}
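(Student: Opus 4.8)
The plan is to pass from the Eulerian (PDE) description of the Wasserstein-$2$ geodesic to the Lagrangian (flow-map) description, where the computation collapses to a single scalar Taylor expansion. First I would recall that a geodesic in $(\mathcal{P}_o(\Omega), g_{\mathrm{W}})$ is characterized by the continuity equation $\partial_t p+\partial_x(p\,\partial_x\Phi)=0$ together with the Hamilton--Jacobi equation $\partial_t\Phi+\frac{1}{2}(\partial_x\Phi)^2=0$. Differentiating the latter in $x$ shows that the velocity $v=\partial_x\Phi$ solves the inviscid Burgers equation $\partial_t v+v\,\partial_x v=0$, so $v$ is constant along characteristics. Consequently the geodesic flow is the straight-line displacement interpolation $X(t,x)=x+t\,\Phi'(x)$, with Jacobian $\partial_x X(t,x)=1+t\,\Phi''(x)$, where $\Phi=\Phi(0,\cdot)$ encodes the initial tangent $\sigma=-\partial_x(p\Phi')$.

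Next I would rewrite the entropy as an integral against the \emph{fixed} initial density $p$. Since the flow is monotone in one dimension, the change of variables $y=X(t,x)$ together with the Monge--Amp\`ere relation $p(t,X(t,x))\,\partial_x X(t,x)=p(x)$ gives
\begin{equation*}
\mathcal{H}(p(t,\cdot))=-\int_\Omega p(x)\log p(t,X(t,x))\,dx=\mathcal{H}(p)+\int_\Omega p(x)\log\big(1+t\,\Phi''(x)\big)\,dx,
\end{equation*}
because $\log p(t,X(t,x))=\log p(x)-\log(1+t\Phi''(x))$. The first term is independent of $t$, so for every $n\ge 1$ only the second term contributes to $\frac{d^n}{dt^n}\mathcal{H}$.

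It then remains a one-variable computation. Writing $w=\Phi''(x)$ and expanding $\log(1+tw)=\sum_{k\ge 1}\frac{(-1)^{k-1}}{k}(tw)^k$, the $n$-th $t$-derivative at $t=0$ equals $(-1)^{n-1}(n-1)!\,w^n$. Differentiating under the integral sign and multiplying by $-1$ yields
\begin{equation*}
-\frac{d^n}{dt^n}\mathcal{H}(p(t,\cdot))\Big|_{t=0}=(-1)^n(n-1)!\int_\Omega(\Phi''(x))^n\,p(x)\,dx,
\end{equation*}
which is the claim; the cases $n=1,2,3$ then follow by inspection, recovering the gradient $\mathrm{grad}_{\mathrm{W}}\mathcal{H}$, the Wasserstein Hessian of $\mathcal{H}$ (the metric $g_{\mathrm{H}}$ up to sign), and the transport $3$-symmetric tensor $T_{\mathrm{H}}$ of Definition \ref{LT}.

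I expect the main obstacle to be the analytic justification rather than the algebra: one must ensure the geodesic stays in $\mathcal{P}_o(\Omega)$ so that $1+t\Phi''(x)>0$ and $X(t,\cdot)$ remains a smooth monotone diffeomorphism on the relevant $t$-interval (this is where the hypothesis of smooth strictly positive densities and the single-valuedness of the $1$D optimal map enter), and then to differentiate under the integral via a uniform bound on $\partial_t^n\log(1+t\Phi'')$ against the fixed weight $p$. As a cross-check for small $n$, I would also verify the formula purely in the Eulerian picture: differentiating $\mathcal{H}(p(t,\cdot))$ directly and repeatedly using the continuity and Hamilton--Jacobi equations together with integration by parts reproduces $\int_\Omega\Phi''p\,dx$, $-\int_\Omega(\Phi'')^2p\,dx$, and $2\int_\Omega(\Phi'')^3p\,dx$ for $n=1,2,3$, matching the Lagrangian result.
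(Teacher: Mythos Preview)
Your argument is correct, and it takes a genuinely different route from the paper. The paper stays entirely in the Eulerian picture: it verifies $n=1$ by differentiating $\mathcal{H}(p(t,\cdot))$ and integrating by parts, and then proves by induction that
\[
\frac{d}{dt}\int_\Omega (\partial_{xx}\Phi(t,x))^k\,p(t,x)\,dx=-k\int_\Omega (\partial_{xx}\Phi(t,x))^{k+1}\,p(t,x)\,dx,
\]
using the coupled continuity and Hamilton--Jacobi equations directly. Your Lagrangian reduction to the closed formula $\mathcal{H}(p(t,\cdot))=\mathcal{H}(p)+\int_\Omega p(x)\log(1+t\Phi''(x))\,dx$ bypasses the induction altogether and makes the appearance of $(-1)^n(n-1)!$ transparent as the $n$-th derivative of $\log$ at $1$; it also explains immediately why the derivative of order $n$ is a single moment of $(\Phi'')^n$. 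The paper's inductive Eulerian computation, on the other hand, never invokes the explicit flow map or the Monge--Amp\`ere relation, and is the approach more likely to survive in higher-dimensional sample spaces where the displacement interpolation no longer factors through a scalar Jacobian. Your final paragraph's Eulerian cross-check for $n=1,2,3$ is in fact exactly the paper's method specialized to those cases.
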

\begin{proof}
We recall that the {Levi-Civita connection} induced geodesics in $(\mathcal{P}_o(\Omega), g_{\mathrm{W}})$ satisfies  
\begin{equation*}
\left\{\begin{aligned}
&\partial_{t} p(t,x)+\partial_x(p(t,x)\partial_x \Phi(t,x))=0\\
&\partial_{t} \Phi(t,x)+\frac{1}{2}|\partial_x\Phi(t,x)|^2=0,
\end{aligned}\right.
\end{equation*}
where $p(0,x)=p(x)$ and $\partial_tp(0,x)=\sigma(x)=-\partial_x(p(x)\Phi'(x))$. 

We prove the result by induction. When $n=1$, we have
\begin{equation*}
\begin{split}
-\frac{d}{dt}\mathcal{H}(p(t,\cdot))|_{t=0}=&-\int_\Omega \partial_x(p(x)\Phi'(x)) (\log p(x)+1) dx\\
=&\int_\Omega \Phi'(x)\partial_x\log p(x)p(x)dx\\
=&\int_\Omega \Phi'(x) \partial_x p(x)dx\\
=&-\int_\Omega \Phi''(x) p(x) dx,
\end{split}
\end{equation*}
where we use the fact that $\partial_x\log p(x)\cdot p(x)=\frac{\partial_xp(x)}{p(x)}\cdot p(x)=\partial_xp(x)$ in the third equality. Assume that for $n=k$, $k\in \mathbb{N}$, we have 
\begin{equation*}
-\frac{d^k}{dt^k}\mathcal{H}(p(t,\cdot))|_{t=0}=(-1)^k(k-1)!\int_\Omega (\Phi''(x))^k p(x)dx.
\end{equation*}
Note that the second equation of the geodesic in $(\mathcal{P}_o(\Omega), g_{\mathrm{W}})$ can be reformulated as below:
\begin{equation*}
\partial_t\partial_x\Phi(t,x)+\partial_{xx}\Phi(t,x)\cdot\partial_x\Phi(t,x)=0. 
\end{equation*}
Hence
\begin{equation*}
\begin{split}
\frac{d}{dt}\int_\Omega (\partial_{xx}\Phi(t,x))^k p(t,x)dx=&\int_\Omega\partial_t\Big((\partial_{xx}\Phi(t,x))^k\Big) p(t,x)dx+\int_\Omega (\partial_{xx}\Phi(t,x))^k\partial_tp(t,x)dx\\
=&\quad\int_\Omega k(\partial_{xx}\Phi(t,x))^{k-1} \partial_x^2\partial_{t}\Phi(t,x)p(t,x)dx\\
&-\int_\Omega (\partial_{xx}\Phi(t,x))^k \partial_x(p(t,x)\partial_x\Phi(t,x))dx\\
=&-\int_\Omega k(\partial_{xx}\Phi(t,x))^{k-1} \partial_x(\partial_{xx}\Phi(t,x)\partial_x\Phi(t,x))p(t,x)dx\\
&+\int_\Omega k(\partial_{xx}\Phi(t,x))^{k-1}\partial_x^3\Phi(t,x)\partial_x\Phi(t,x)p(t,x)dx\\
=&-\int_\Omega k(\partial_{xx}\Phi(t,x))^{k+1}p(t,x)dx.
\end{split}
\end{equation*}
 From the assumption, we have
\begin{equation*}
\begin{split}
-\frac{d^{k+1}}{dt^{k+1}}\mathcal{H}(p(t,\cdot))|_{t=0}=&(-1)^k(k-1)!\frac{d}{dt}\int_\Omega (\Phi''(t,x))^k p(t,x)dx|_{t=0}\\
=&(-1)^k(k-1)!\cdot (-1)\cdot k\int_\Omega (\Phi''(x))^{k+1}p(x)dx\\
=&(-1)^{k+1}k!\int_\Omega (\Phi''(x))^{k+1}p(x)dx,
\end{split}
\end{equation*}
which finishes the proof. 
\end{proof}
\begin{remark}
These geometric formulas are derived based on the Riemannian Levi-Civita connection in density manifold $(\mathcal{P}_o(\Omega), g_{\mathrm{W}})$. They formulate classical Gamma calculuses; see details in \cite{LiG, LiG1, Villani2009_optimal}. We leave the studies of high-order derivatives of negative entropy in $(\mathcal{P}_o(\Omega), g_{\mathrm{W}})$ in high dimensional sample spaces in future works.  
\end{remark}
\subsection{Transport Hessian distance}
{We next review the definition of transport Hessian distances between probability distributions. For completeness of this paper, we also provide its derivation here. 

 Define a distance function $\mathrm{Dist}_{\mathrm{H}}$ $\colon \mathcal{P}(\Omega)\times\mathcal{P}(\Omega)\rightarrow\mathbb{R}$ by
\begin{equation}\label{HC}
\mathrm{Dist}_{\mathrm{H}}(p,q)^2=\inf_{p\colon [0,1]\times\Omega\rightarrow\mathbb{R}}\Big\{\int_0^1 g_\mathrm{H}(\partial_t p, \partial_t p)dt\colon p(0,x)=q(x),~p(1,x)=p(x)\Big\}.
\end{equation}
Here the infimum is taken among all smooth density paths $p\colon [0, 1]\times \Omega\rightarrow\mathbb{R}$, which connects both initial and terminal time probability density functions $q$, $p\in \mathcal{P}(\Omega)$. {Interestingly, the variational problem \eqref{HC} admits a closed-form expression.}
\begin{proposition}[\cite{LiGB2}].
The squared transport Hessian distance has the following formulation. 
\begin{equation*}
\mathrm{Dist}_{\mathrm{H}}(p, q)^2=\int_0^1 \left|\log\frac{Q_p'(u)}{Q_q'(u)}\right|^2du=2\mathrm{D}_{\rt,0}(p\|q).
\end{equation*}
\end{proposition}
\begin{proof}
 There are two change of variables to derive the Hessian distance. Firstly, denote $p^0(x)=q(x)$ and $p^1(x)=p(x)$. Denote the variational problem \eqref{HC} by 
\begin{equation*}
\begin{split}
\mathrm{Dist}_{\mathrm{H}}(p^0,p^1)^2=\inf_{\Phi,p\colon[0,1]\times \Omega\rightarrow\mathbb{R}}&\Big\{\int_0^1\int_\Omega |\partial^2_{yy}\Phi(t,y)|^2p(t,y)dydt\colon \\
&\hspace{1cm}\partial_tp(t,y)+\partial_y(p(t,y) \partial_y\Phi(t,y))=0,~\textrm{fixed $p^0$, $p^1$}\Big\},
\end{split}
\end{equation*}
where the infimum is among all smooth density paths $p\colon [0, 1]\times\Omega\rightarrow\mathbb{R}$ satisfying the continuity equation with the gradient potential vector field. The potential is given by $\Phi\colon [0, 1]\times\Omega\rightarrow\mathbb{R}$.  Denote 
\begin{equation*}
y=T(t,x),\qquad \partial_tT(t,x)=\partial_y\Phi(t,y).
\end{equation*}
Hence 
\begin{equation*}
\begin{split}
\mathrm{Dist}_{\mathrm{H}}(p^0,p^1)^2=&\inf_{T\colon[0,1]\times\Omega\rightarrow \Omega}\Big\{\int_0^1\int_\Omega|\partial_y v(t,T(t,x))|^2p(t,T(t,x))dT(t,x)dt\colon\\
&\hspace{3cm} T(t,\cdot)_\#p(0,\cdot)=p(t,\cdot)\Big\},
\end{split}
\end{equation*}
where the infimum is taken among all smooth transport map functions $T\colon [0,1]\times \Omega \rightarrow \Omega$ with $T(0,x)=x$ and $T(1,x)=T(x)$. Thus, 
\begin{equation}\label{variation1}
\begin{split}
&\int_0^1\int_\Omega|\partial_y \partial_tT(t,x)|^2p(t,T(t,x))\partial_xT(t,x)dxdt\\
=&\int_0^1\int_\Omega|\partial_x \partial_tT(t,x)\frac{dx}{dy}|^2p(t,T(t,x))\partial_xT(t,x)dxdt\\
=&\int_0^1\int_\Omega|\partial_t \partial_xT(t,x)\frac{1}{\partial_xT(t,x)}|^2{q(x)}dxdt\\
=&\int_0^1\int_\Omega|\partial_t \log \partial_xT(t,x)|^2{q(x)}dxdt,
\end{split}
\end{equation}
where we use the fact that $\partial_t \log \partial_xT(t,x)=\partial_t \partial_xT(t,x)\frac{1}{\partial_xT(t,x)}$. 

Secondly, denote $y=F_q(x)$, where $y\in[0,1]$. Again, by using a chain rule for $T(t,\cdot)_\#q=p_t$ with $p_t:=p(t,x)$, we have
\begin{equation}\label{cofv}
q(x)=\frac{dy}{dx}=\frac{1}{\frac{dx}{dy}}=\frac{1}{Q_q'(y)},\quad 
\partial_xT(t,x)=\partial_xF_{p_t}^{-1}(F_q(x))=Q_{p_t}'(y)\frac{dy}{dx}=\frac{Q'_{p_t}(y)}{Q'_q(y)}.
\end{equation}
Under the above change of variables and the fact that $\partial_t \log \partial_xT(t,x)=\partial_t\log Q'_{p_t}(y)$, we observe that the variation problem \eqref{variation1} satisfies  
\begin{equation}\label{variation}
\begin{split}
\mathrm{Dist}_{\mathrm{H}}(p^0,p^1)^2
=&\inf_{Q'_{p_t}\colon [0,1]^2\rightarrow\mathbb{R}}\Big\{\int_0^1\int_0^1 |\partial_t \log Q'_{p_t}(y)|^2dydt\Big\},
\end{split}
\end{equation}
where the infimum is taken among all smooth paths $Q'_{p_t}\colon [0,1]^2\rightarrow\mathbb{R}$ with fixed initial and terminal time conditions. By using the Euler-Lagrange equation for variable $Q'_{p_t}$, we show that the geodesic equation in transport Hessian metric satisfies 
\begin{equation*}
\partial_{tt}\log Q'_{p_t}(y)=0. 
\end{equation*}
This means 
\begin{equation*}
\log(Q_{p_t}'(y))=t\log(Q_{p}'(y))+(1-t)\log(Q_{q}'(y)),
\end{equation*}
and 
\begin{equation*}
\partial_t\log(Q_{p_t}'(y))=\log(Q_{p}'(y))-\log(Q_{q}'(y)). 
\end{equation*}
Combining the above facts in the variational problem \eqref{variation1}, we finish the proof.
\end{proof}
\subsection{Transport KL divergence}
We also briefly present the derivation of transport KL divergence. Consider a functional $\mathcal{F}\colon\mathcal{P}\rightarrow\mathbb{R}$. We define a class of Bregman divergences in Wasserstein-$2$ space by 
\begin{equation*}
\mathrm{D}_{\rt,\mathcal{F}}(p\|q)=\mathcal{F}(p)-\mathcal{F}(q)-\int_\Omega \Big(\partial_x\frac{\delta}{\delta q(x)}\mathcal{F}(q), T(x)-x\Big)q(x)dx,
\end{equation*}
where $p$, $q\in \mathcal{P}(\Omega)$, $\frac{\delta}{\delta q(x)}$ is the $L^2$ first variation w.r.t. $q(x)$, and $T_\#q=p$. { See examples of $L^2$ first variation of functionals in \cite[10.4.2]{AGS}.} We refer $\mathrm{D}_{\rt,\mathcal{F}}$ as the transport Bregman divergence. If $\mathcal{F}$ is a second moment functional, i.e. $\mathcal{F}(p)=\int_\Omega |x|^2 p(x)dx$, then $\mathrm{D}_{\rt, \mathcal{F}}$ forms the Wasserstein-2 distance. If $\mathcal{F}(p)=-\mathcal{H}(p)=\int_\Omega p(x)\log p(x)dx$, then 
\begin{equation*}
\mathcal{F}(p)=-\mathcal{H}(p)=\int_\Omega \log \frac{q(x)}{T'(x)}q(x) dx. 
\end{equation*}
Thus, the transport Bregman divergence of $\mathcal{F}$ satisfies 
\begin{equation*}
\begin{split}
\mathrm{D}_{\rtKL}(p\|q)=&\mathrm{D}_{\rt,\mathcal{F}}(p\|q)\\
=& \int_\Omega \log \frac{q(x)}{T'(x)}q(x)-q(x)\log q(x)-(\partial_x\log q(x), T(x)-x)q(x)dx\\
=& \int_\Omega\Big[-\log T'(x)q(x)-(\partial_x q(x), T(x)-x)\Big]dx\\
=&\int_\Omega \Big(T'(x)-\log T'(x)-1\Big)q(x)dx,
\end{split}
\end{equation*}
where we use the fact that $q(x)\partial_x\log q(x)=\partial_xq(x)$ in the third equality, and apply the integration by parts in the last equality. From now on, we name $\mathrm{D}_{\rtKL}$ the {\em transport KL divergence}. Again, by the change of variable \eqref{cofv}, we note that $\mathrm{D}_{\rtKL}$ can be formulated in terms of quantile density functions:  
\begin{equation*}
\mathrm{D}_{\rtKL}(p\|q):=\int_0^1\Big(\frac{Q_{p}'(u)}{Q_{q}'(u)}-\log\frac{Q_{p}'(u)}{Q_{q}'(u)}-1\Big)du=\mathrm{D}_{\rt,1}(p\|q).
\end{equation*}
From this formulation, we observe that $\mathrm{D}_{\rtKL}$ is an Itakura--Saito type divergence in term of transport map functions or quantile density functions.


\subsection{Validation of canonical divergences}
We are now ready to valid that the proposed transport alpha divergence is a canonical divergence, which follows the Definition 4.8 in \cite{IG2}. In other words, the following proposition holds. 
\begin{proposition}
Let $r(t,\cdot)$, $t\in [0,1]$, be the probability density, satisfying transport alpha geodesic in Proposition \ref{prop6}, with $\alpha=-1$, $r(0,x)=p(x)$ and $r(1,x)=q(x)$. 
Then the following equality holds: 
\begin{equation*}
\mathrm{D}_{\rt,1}(p\|q)=\int_0^1 t\cdot g_{\mathrm{H}}(\partial_t r, \partial_tr) dt,
\end{equation*}
where $g_{\mathrm{H}}$ is the transport Hessian metric defined in \eqref{THM}. 
\end{proposition}
\begin{proof}
We follow the idea of proof in \cite{IG2}, with using the change of variable \eqref{cofv} and the transport Hessian metric $g_{\mathrm{H}}$. 
We denote a monotone function $\hat T(x)=T^{-1}(x)$, such that $(T^{-1})_\#p=q$. Denote $\hat T\colon [0,1]\times \Omega\rightarrow\Omega$ satisfying Proposition \ref{prop6} with $\alpha=-1$, such that 
$$\partial_t(\partial_x\hat T(t,x))=\partial_t(t\hat T'(x)+(1-t))=\hat T'(x)-1.$$ 
We prove the following claim. 

\noindent\textbf{Claim:} 
\begin{equation*}
\begin{split}
-\frac{d^2}{dt^2}\mathcal{H}(r(t,\cdot))=&g_{\mathrm{H}}(\partial_t r, \partial_tr).
\end{split}
\end{equation*}
\begin{proof}[Proof of Claim]
Similar as the derivation of equation \eqref{variation1}, we note that 
\begin{equation*}
\begin{split}
g_{\mathrm{H}}(\partial_t r, \partial_tr)=&\int_\Omega\left|\partial_t \partial_x\hat T(t,x)\frac{1}{\partial_x\hat T(t,x)}\right|^2{q(x)}dx.
\end{split}
\end{equation*}
We also note that $-\mathcal{H}(r(t,\cdot))=\int_\Omega r(t,x)\log r(t,x)dx=\int_\Omega p(x)\log\frac{p(x)}{\partial_x\hat T(t,x)} dx$. Hence
\begin{equation*}
\begin{split}
-\frac{d^2}{dt^2}\mathcal{H}(r(t,\cdot))=&\int_\Omega \frac{d^2}{dt^2}\log \frac{p(x)}{\partial_x\hat T(t,x)}p(x) dx\\ 
=&\int_\Omega\left(\left|\partial_t \partial_x\hat T(t,x)\frac{1}{\partial_x\hat T(t,x)}\right|^2-\frac{\partial_{tt}\partial_x\hat T(t,x)}{\partial_x\hat T(t,x)}\right){q(x)}dx\\
=&g_{\mathrm{H}}(\partial_t r, \partial_tr).
\end{split}
\end{equation*}
Here we use the fact that $\partial^2_{tt}\partial_x\hat T(t,x)=0$. 
\end{proof}
Thus, 
\begin{equation*}
\begin{split}
&\int_0^1 t\cdot g_{\mathrm{H}}(\partial_t r, \partial_tr) dt=\int_0^1 t\cdot (-\frac{d^2}{dt^2}\mathcal{H}(r(t,\cdot)))dt\\
=&t\cdot (-\frac{d}{dt}\mathcal{H}(r)|_{t=0}^{t=1})-\int_0^1(-\frac{d}{dt}\mathcal{H}(r))dt\\
=&-\frac{d}{dt}\mathcal{H}(r)|_{t=1}+\mathcal{H}(q)-\mathcal{H}(p). 
\end{split}
\end{equation*}
Here we note that 
\begin{equation}\label{a}
\begin{split}
-\frac{d}{dt}\mathcal{H}(r)|_{t=1}=&\int_\Omega p(x)\frac{d}{dt}\log\frac{p(x)}{\hat T'(x)} dx|_{t=1}\\
=&-\int_\Omega p(x) \partial_t\log \hat T'(t,x) dx|_{t=1}\\
=&-\int_\Omega p(x) \frac{\partial_t\hat T'(t,x)}{\hat T'(t,x)} dx|_{t=1}\\
=&-\int_\Omega p(x) \frac{\hat T'(x)-1}{\hat T'(x)} dx=-1+\int_\Omega \frac{1}{\hat T'(x)}p(x) dx. 
\end{split}
\end{equation}
Similar as in the change of variable formula \eqref{cofv}, we have 
\begin{equation*}
\int_\Omega \frac{1}{\hat T'(x)}p(x) dx=\int_0^1\frac{Q_p'(u)}{Q_q'(u)}du. 
\end{equation*}
We also note the fact that 
\begin{equation}\label{b}
\begin{split}
\mathcal{H}(q)-\mathcal{H}(p)=&\int_\Omega \Big[p(x)\log p(x)-q(x)\log q(x)\Big]dx \\
=&\int_\Omega p(x)\log p(x) dx-\int_\Omega \log q(\hat T(x)) q(\hat T(x))\hat T'(x)dx \\
=&\int_\Omega p(x)\log p(x) dx-\int_\Omega \log \frac{p(x)}{\hat T'(x)} p(x)dx \\
=&\int_\Omega p(x)\log \hat T'(x)dx\\
=&-\int_0^1\log\frac{Q_p'(u)}{Q_q'(u)}du,
\end{split}
\end{equation}
where the last equality again follows from the change of variable formula in \eqref{cofv}. Combining the above equalities in \eqref{a} and \eqref{b}, we finish the proof. 
\end{proof}

}

\end{document}